\newtheorem{theorem}{Theorem}
\begin{document}

\title{Dynamic Curves for Decentralized Autonomous Cryptocurrency Exchanges}

\author{      Bhaskar Krishnamachari, Qi Feng, Eugenio Grippo\\
               Department of Electrical and Computer Engineering,\\ Viterbi School of Engineering \\ University of Southern California}
\date{\today}


\maketitle

\begin{abstract}
One of the exciting recent developments in decentralized finance (DeFi) has been the development of decentralized cryptocurrency exchanges that can autonomously handle conversion between different cryptocurrencies. Decentralized exchange protocols such as Uniswap, Curve and other types of Automated Market Makers (AMMs) maintain a liquidity pool (LP) of two or more assets constrained to maintain at all times a mathematical relation to each other, defined by a given function or curve. Examples of such functions are the constant-sum and constant-product AMMs. Existing systems however suffer from several challenges. They require external arbitrageurs to restore the price of tokens in the pool to match the market price. Such activities can potentially drain resources from the liquidity pool. In particular dramatic market price changes can result in low liquidity with respect to one or more of the assets and reduce the total value of the LP. We propose in this work a new approach to constructing the AMM by proposing the idea of \emph{dynamic curves}. It utilizes input from a market price oracle to modify the mathematical relationship between the assets so that the pool price continuously and automatically adjusts to be identical to the market price. This approach eliminates arbitrage opportunities and, as we show through simulations, maintains liquidity in the LP for all assets and the total value of the LP over a wide range of market prices. 
\end{abstract}

\section{Introduction}

Since the introduction of Bitcoin as the first peer-to-peer digital cash~\cite{nakamoto2019bitcoin}, the birth of different cryptocurrencies has revolutionized the world of finance~\cite{white2015market}. As of the time of writing this article, it is estimated that the total cryptocurrency market capitalization is more than \$600 Billion, involving thousands of different coins~\cite{coinmarket}. 

Traditionally, cryptocurrency exchanges, which use an order book mechanism, are centralized. They suffer from concerns about the concentration of financial power~\cite{hertzog2018bancor} and being prone to a single point of failure, resulting in a potentially significant loss of funds when attacked~\cite{angeris2019analysis}. Additionally, they also pose a liquidity problem for tokens with a smaller market capitalization resulting in barriers to entry to the financial market~\cite{hertzog2018bancor}. 

On the other hand, it is difficult to implement the order book model in a decentralized manner in the form of a blockchain smart contract~\cite{warren20170x}~\cite{angeris2019analysis}. First, market makers will face high gas costs to execute transactions, regardless of their sizes~\cite{warren20170x}. Second, it will require a complex matching algorithm to support a variety of order types~\cite{angeris2019analysis}.

Automated Market Makers such as Hanson's logarithmic market scoring rules (LMSRs) are widely used in traditional prediction markets to address the problem of low liquidity and trading volume~\cite{hanson2007logarithmic, wang2020automated}. An LMSR-based AMM is also used in decentralized prediction markets such as Gnosis~\cite{Gnosiswhitepaper2017} and Augur~\cite{peterson2015augur}. 
Given certain similar market characteristics, curve-based Automated Market Makers (AMM) were recently introduced to address the challenges in a currency exchange context. They are currently one of the areas of decentralized finance receiving the most attention. Instead of relying on the traditional market makers to provide liquidity,  decentralized exchanges utilizing curve-based AMMs, such as Bancor~\cite{hertzog2018bancor}, Uniswap~\cite{adams2020uniswap}, StableSwap/Curve~\cite{egorov2019stableswap} and many others implement a liquidity pool (LP) using smart contracts on a blockchain. In this model, liquidity providers supply single or multiple types of tokens to the designated liquidity pools, and traders exchange against the pools of tokens instead of relying on order matching. The liquidity pool of these AMMs track a pre-defined mathematical function (curve), thus determining how many tokens of one type to provide to a trader in exchange for a certain amount of another. Curve-based AMMs provide a continuous supply of liquidity compared to the order book model. Additionally, depending on the mathematical function (curve) utilized, they can potentially allow for a wide range of exchange prices. However, the token price within a liquidity pool for a given AMM (which we refer to as the pool price) might be different from the market price. 

When such a gap occurs on a decentralized AMM-based exchange, arbitrageurs may have the opportunity to buy or sell tokens to set the pool price equal to the market price, restoring equilibrium. However, in some cases, particularly when the market price changes dramatically, the AMM-based LP could lose liquidity with respect to one or more of the assets. We propose in this work a new approach to constructing the AMM by proposing the idea of \emph{dynamic curves}. It utilizes input from a market price oracle to modify the mathematical relationship between the assets so that the pool price continuously and automatically adjusts to be identical to the market price. This eliminates arbitrage opportunities and, as we show through simulations, helps the AMM-based LP maintain liquidity and total value over a wide range of market prices. 

The following are the key contributions of this work: 
\begin{itemize}
    \item We present a simple and unified mathematical and conceptual framework (in section~\ref{sec:background}) describing existing curve-based AMMs and key metrics such as pool price, slippage, divergence loss. It unifies much of what is known about them today. We believe this section will be of independent interest to researchers starting out in this area. 
    \item We focus on the liquidity problem posed by arbitrageurs on existing AMMs, especially when the market price for one of the assets becomes too high, causing asset depletion and value reduction of the liquidity pool. 
    \item We present a new dynamic curve mechanism, which is general enough to be adapted to any monotonic function/curve used on an AMM. This mechanism relies on an external market price oracle and eliminates arbitrageurs. We illustrate the mechanism concretely through generalizations of both constant-sum and constant-product models.  
    \item We present numerical simulations showing the clear advantages of our proposed dynamic curve mechanism in a) retaining greater liquidity in the pool to benefit small traders, b) retaining greater total value in the liquidity pool, and c) functioning effectively over a much larger range of market prices. 
\end{itemize}

The rest of this paper is organized as follows: we present and discuss relevant prior work in section~\ref{sec:related}. In section~\ref{sec:background} we give a unified treatment and definition of key concepts and metrics relevant to decentralized AMMs. In section~\ref{sec:dynamic} we propose and describe our new dynamic AMM models. In section~\ref{sec:sims} we present agent-based simulations and compare the performance of four different AMMs, including two static AMMs (Constant- Sum AMM, Constant-Product AMM) and their two dynamic generalizations that we introduce in this work. We present concluding comments in section~\ref{sec:conclusions}.

\section{Related Work \label{sec:related}}

Bancor was the first DEX to implement the type of AMM called Bonding Curve, which provides continuous liquidity \cite{hertzog2018bancor}. In this type of AMM, there is a single token (Bancor Network Token - BNT) used as an intermediate currency. There are separate pools for each non-native currency to be traded against BNT. This model is a little different from the arbitrary two-asset curve based AMM's that we focus on in this paper (though there are significant connections as well). In curve-based AMMs, any two currencies could be traded directly against each other. 

Borrowing solutions from the prediction market, Buterin~\cite{buterin_2016} first proposed such a curve-based AMM for a decentralized exchange. Specifically, he proposed the Constant Product Curve. It is a convex curve that takes the form of $x\cdot y=k$, where $x$ and $y$ are the total supply of two tokens in a liquidity pool and $k$ is the product constant. It was subsequently implemented by Adams \emph{et al.} \cite{adams2020uniswap} to create Uniswap. 

With the shape of a downward-sloping straight line, the Constant Sum Curve~\cite{berenzon_2020}\cite{chainlink_2020} takes the form of $x+y = k$. $x$ and $y$ are the total supply of two tokens in a liquidity pool, and $k$ is the sum constant. 
StableSwap/Curve \cite{egorov2019stableswap} implemented an AMM curve that is a blend of Constant Sum and Constant Product to provide continuous liquidity, price stability and a built-in pool balancing mechanism. 

Wang~\cite{wang2020automated} proposed the Constant Ellipse Curve AMM with the general form of $(x-a)^2 + (y-a)^2 + b\cdot xy = C$, in which $a$ and $b$ are constant. One can choose between the concave and the convex curve in the first quadrant~\cite{wang2020automated}. Wang also presents the curve corresponding to the LMSR rule.

Angeris and Chitra analyze such curve-based AMMs, which they refer to as constant function market makers in the general case, i.e., with arbitrarily many tokens~\cite{angeris2020improved}.  They analyze various mathematical properties of such AMMs, including formulating the optimal arbitrage by traders as a convex optimization problem. 

\subsection{Performance metrics for AMMs}

Slippage and divergence loss are the two main factors contributing to the proposal and adoption of different AMMs. The former is directly tied to the loss of traders, while the latter is directly connected to the liquidity providers’ returns. 

Slippage is the difference between the expected and actual trade execution price~\cite{totle_2019}, and in the AMM context, it is defined as the gap between the pool price before a trade and the effective price obtained for the trade (see section~\ref{sec:slippage}). As long as token price changes during trade, slippage incurs. In addition, when large trades happen compared to pool size, slippage increases dramatically, resulting in lower trading profits~\cite{chen_2019}. 

Divergence loss, sometimes called impermanent loss, incurs when liquidity providers withdraw liquidity with the presence of a difference in token price before and after trades~\cite{pintail_2020}. If funds are pulled out during a large price swing, liquidity providers will suffer a loss of total asset value, compared to simply holding the assets~\cite{chainsecurity}. Given trades might affect token pool prices and hence divergence loss, it is important to distinguish between regular and arbitrage trading. Divergence loss due to arbitrage trading in closing pool and market price gap can be mitigated by incorporating reliable oracles to protect liquidity providers~\cite{bancor_2020}.

\subsection{Slippage and Divergence Loss in AMMs}

In Bancor, given the dynamic pool price by design, trades experience slippage. Divergence loss incurs as the pool price is intrinsic and relies on arbitrageurs to close price gaps \cite{hertzog2018bancor}. To protect liquidity providers, Bancor v2 integrates with Chainlink price oracle to reduce divergence loss from arbitraging \cite{bancor_2020}.  


In Uniswap, similar to Bancor, the pool price is inherently unstable and the size of trades in relation to pool size affects pool price to different extents. The larger the trades are, the higher slippage and divergence loss can occur. 

The constant-sum curve has zero slippage \cite{berenzon_2020}\cite{chainlink_2020} and no divergence loss (as we show in section~\ref{sec:background}). However, because it has a fixed price and finite liquidity, it is only suitable for stablecoins and could easily be depleted of one of its pool assets; for this reason, it is primarily of theoretical interest \cite{berenzon_2020}\cite{chainlink_2020}. We use it as a baseline in our work.

StableSwap/Curve introduces an invariant that allows trading on a Constant Sum shaped curve when the portfolio is relatively balanced and switch trading to a Constant Product shaped curve when imbalanced \cite{egorov2019stableswap}. Such a design allows much lower slippage and divergence loss but is only applicable to stablecoins as the price of the desired trading range is always close to 1. 

The constant ellipse curve introduced by Wang~\cite{wang2020automated} has a fixed price range compared to that of a Constant Product Curve and thus a fixed range of slippage and divergence loss. Wang also concludes that the LMSR curve would not be suitable for exchanges if the numbers of the two tokens are not balanced in the liquidity pool. 

The proposal in this paper presents an approach for AMM-based decentralized exchanges using dynamic curves that eliminates the possibility of arbitrage and thus any divergence loss. Instead, as we show, depending on the chosen family of curves, any slippage loss incurred by traders is converted to an equivalent gain for the liquidity providers. A dynamic version of the constant sum curve is a special case of our proposed solution, and in that case, there is no slippage loss at all.  


\subsection{Simulation}
There are two simulations conducted respectively on StableSwap/Curve and Uniswap v1 to evaluate the DEXes performance. Egorov \cite{egorov2019stableswap} suggests that StableSwap/Curve generates 312\% APR and 0.06\% fee per trade for liquidity providers with total liquidity of \$30000 in DAI, USDC and USTD over 6 months. Angeris \emph{et al.} \cite{angeris2019analysis} conduct an agent-based simulation to test the hypothesis that Uniswap has a robust market mechanism with little arbitrage opportunities under various market conditions. Three types of agents, including profit-maximizing arbitrageurs, traders with exogenous motives and liquidity providers (both active/Markowitz portfolio optimizing and passive), interact in the Uniswap and a stochastic reference markets. The results show that Uniswap tracks market prices closely in different market environments, and Constant Product Curves have the potential to be price oracles \cite{angeris2019analysis}.

\section{Background - AMM Curves and Key Metrics \label{sec:background}}

Consider a liquidity pool with two coins, whose amounts are denoted by $x$ and $y$. For convenience, we will refer to these two tokens as $X$ and $Y$. The AMM will allow the exchange of one token for another following a given function $f$ as follows:

\begin{equation}
    y = f(x)
\end{equation}

We refer to a plot of this function showing all allowed combinations of $y$ and $x$ as the \emph{AMM curve}. For example, there can be a constant product curve, which is $y = \frac{k}{x}$ or a constant sum curve,  which would be denoted as $y = c - x$. It is generally considered reasonable for the AMM curve to be convex and monotonically decreasing because this ensures (as we shall see in the next section) that the price for the token $X$ is monotonically decreasing as a function of its availability in the pool, as should be expected of a typical supply curve.  

\subsection{Price and Bonding Curve}

Given an AMM curve, we can derive the price of the $X$ token as follows:

\begin{equation}
    p_X(x,y) = -\frac{dy}{dx}
\end{equation}

For example, for the constant product curve, we would get $    p_X(x,y) = \frac{k}{x^2}$ and likewise for the constant sum curve, we would get that $p_X(x,y) = 1 $.

A plot of $p_X(x,y)$ versus $x$ shows how the price of token $X$ varies with its supply in the liquidity pool. Such a curve is referred to as a \emph{bonding curve} or as a \emph{price curve}. Note that if $f(x)$ is monotonically decreasing, then the price will always be positive, and if $f(x)$ is convex, then the bonding curve will be monotonically decreasing (as it should, being a type of supply curve). 

\subsection{Value of the pool}
Given the definition of price, we can also assess the value of a given liquidity pool (measured in terms of $Y$) as follows: 

\begin{equation}
    V_{p}(x,y) = p_X \cdot x + y
\end{equation}

\subsection{Slippage \label{sec:slippage}}

\begin{figure}
    \centering
    \includegraphics[scale=0.25]{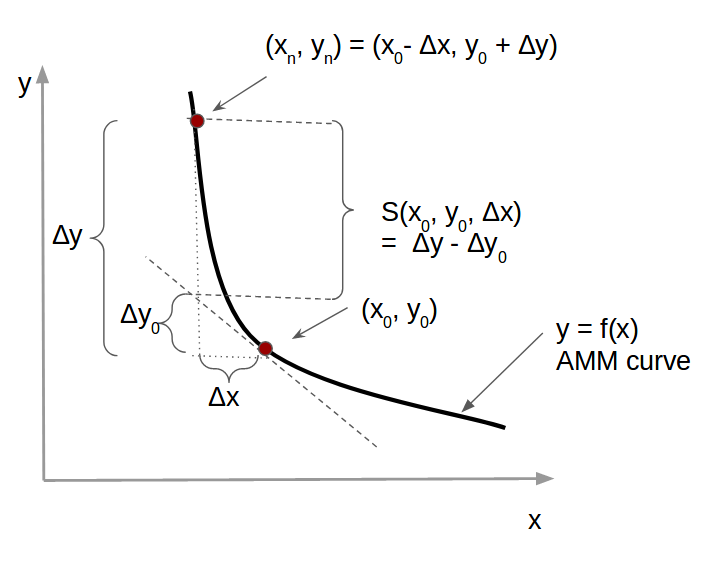}
   \caption{Illustration of Price Slippage on a Trade}
    \label{fig:slippage}
\end{figure}

For curve-based AMM, slippage is defined as the loss incurred by a trader due to the price mismatch between the pool price at which the trade is initiated and the effective price obtained during the trade. Let us consider a trader seeking to buy $\Delta x$ units of token $X$  when the LP is at a state $(x_0, y_0)$. Say that on the curve, the new point after the trade will be $(x_n, y_n)$, where $x_n = x_0 - \Delta x$. The amount that the trader would then need to put into the LP will be $
\Delta y = y_n - y_0$. If the pool price at the original point was 
$p_0$, then the buyer would have to pay $\Delta y_0 = p_0 \Delta x$. The difference between $\Delta y$ and $\Delta y_0$ is defined as the slippage loss $S(x_0,y_0,\Delta x)$. This is illustrated in figure~\ref{fig:slippage}. 

Similarly, when the trader wishes to sell $\Delta x$ units of token $X$, the gap between the $\Delta y_0 = p_0 \Delta x$ that the trader would like to receive and the $\Delta y = y_0 - y_n$ that he will actually receive would be the slippage loss on the sale, which could be expressed as $S(x_0,y_0,-\Delta x)$.

It is easy to see that on a constant sum AMM, the slippage loss is always 0 (because the price is constant at all points on the curve, or, equivalently, the tangent line at any point and the AMM curve always coincide). On any strictly convex curve, because the tangent line is always below the curve, the slippage loss will always be a positive quantity (i.e., the trader always incurs a penalty). The total slippage will be higher for a larger trade, and therefore acts as a disincentive for a trader to make large trades with the LP.






\subsection{Divergence Loss}

In general, when a trade is made, the price may change, as the original pair of values $(x_o,y_o)$ moves to a new pair $(x_n, y_n)$ following the curve, resulting in a new price $p_n$. Accordingly, the value of the liquidity pool could potentially decrease after a trade. This decrease as a relative or percentage decrease is referred to as divergence loss $\delta$, and can be formally defined as follows: 

\begin{eqnarray}
\delta & = & \frac{V_{p_n}(x_n,y_n) - V_{p_n}(x_0,y_0)}{V_{p_n}(x_0,y_0)}
\end{eqnarray}

We work out below the divergence loss for the two example curves.

\subsubsection{Divergence loss for constant-product curve}

For the constant product curve, recall that the following hold: 
\begin{eqnarray}
 p_n(x_n, y_n)  =  \frac{k}{x_n^2} & 
\implies x_n  =  \sqrt{\frac{k}{p_n}}  \\ 
 y_n  =  \frac{k}{x_n}  & 
\implies y_n  =  \sqrt{k \cdot p_n} 
\end{eqnarray}

Similarly, we also have that $x_o = \sqrt{\frac{k}{p_o}}$ and $y_o  =  \sqrt{k \cdot p_o} $. 

Then we can define $V_{p_n}(x_n,y_n)$ as follows:
\begin{eqnarray}
    V_{p_n}(x_n,y_n) &=&  p_n \cdot x_n  + y_n \nonumber \\
    & = & p_n \sqrt{\frac{k}{p_n}} + \sqrt{k\cdot p_n} \nonumber \\
    & = & 2 \sqrt{k \cdot p_n}
\end{eqnarray}
    
Likewise, we can define $V_{p_n}(x_o,y_o)$ as follows:
\begin{eqnarray}
    V_{p_n}(x_o,y_o) &=&  p_n \cdot x_o  + y_o \nonumber \\
    & = & p_n \sqrt{\frac{k}{p_o}} + \sqrt{k\cdot p_o} \nonumber \\
\end{eqnarray}

Based on the above two equations, we can calculate the divergence loss as follows: 
\begin{eqnarray}
    \delta & = & \frac{2 \sqrt{k \cdot p_n} - p_n \sqrt{\frac{k}{p_o}} + \sqrt{k\cdot p_o}}{p_n \sqrt{\frac{k}{p_o}} + \sqrt{k\cdot p_o}} 
\end{eqnarray}

Denoting by $\rho$ the ratio of the two prices $\frac{p_n}{p_o}$, the divergence loss for the constant product curve can be simplified to:
\begin{eqnarray}
    \delta &=& \frac{2\sqrt{\rho} - 1 - \rho}{1+\rho}
\end{eqnarray}

This result is given in~\cite{pintail_2020}.


\subsubsection{Divergence loss for constant-sum curve}

Here the price is always 1. The two values can be written as follows: 
\begin{eqnarray}
V_{p_n}(x_o,y_o) = x_o + y_o  &=& c \nonumber \\
V_{p_n}(x_n,y_n) = x_n + y_n    &=& c
\end{eqnarray} 

Since both are the same, the liquidity pool does not show any change in value, and thus the divergence loss in this case will be 0.

\section{Dynamic curves \label{sec:dynamic}}

In the prior work on AMMs, the curve has a fixed form and the exact shape is determined by the initial total liquidity. E.g., in the constant product curve, the parameter $k = x_i \cdot y_i$ where $x_i,y_i$ are the initial amounts of the two tokens. In other words, the curve can only change if the liquidity providers add/remove tokens from the pool, but not from trading activity. 

Consider a trade that happens while the market price of token X remains unchanged at some price $p_{mkt}$. If the pool changes from the state $(x_o,y_o)$ to a new state $(x_n,y_n)$, then the pool price would potentially change from $p(x_o,y_o)$ to $p(x_n,y_n)$  (assuming the curve is not the constant-sum curve in which case there is no change in the pool price). 
This can result in at least a temporary difference between the pool price and the market price. As we do in the rest of the paper, we are assuming here that the pool's capitalization is a relatively small fraction of the total market capitalization of the underlying assets so that the market price is not determined or affected by the pool price. 

Another reason for a temporary difference between the pool price and the market price could be that the market price changes due to some external market conditions.  In either case, traditionally, it is expected that these temporary differences will be erased by the action of arbitrageurs, restoring the pool price back to the market price.


We propose a new mechanism that instead changes the curve every time the market price changes in such a way as to ensure that the current pool price will always equal the market price, \emph{without requiring action by external arbitrageurs}. We illustrate below how this new mechanism would generalize the constant-product and constant-sum curves -- the same approach can be used to generalize other smooth, decreasing, convex curves to the dynamic setting as well.

\subsection{Dynamic curve adjustment to generalize constant-sum}

In this case, we can describe the market-price-tracking dynamic curve as follows:

\begin{equation} p_{mkt}(t) \cdot (x(t) - a(t)) + y(t) = c
\end{equation}

Here, the parameter $a(t)$ will also be adjusted dynamically when the market price changes, to ensure that the new linear curve passes through the current pair of $(x(t),y(t))$ values. For simplicity, say the market is initialized at some pair $(x(0),y(0)$ at a market price of 1. Then $c$ could be set to be  $x(0) + y(0)$, with the original $a(0)=0$. 

If the market shifts to a price of $p_{mkt}(t)$ at some time $t$ and the liquidity pool at this arbitrary time is $(x(t),y(t))$, then the value of $a(t)$ will also be adjusted as follows to match the above dynamic curve:
\begin{eqnarray}
a(t) = x(t) - \frac{c - y(t)}{p_{mkt}(t)} 
\end{eqnarray}

Intuitively, this dynamic curve is always a line that has the slope corresponding to the current market price and always passing through the current liqudity pair $(x(t),y(t))$. 

Any trade that happens uses the current (instantaneous) curve. This allows the constant sum AMM to flexibly support a wider range of market prices while still providing 0 slippage compared to the original design (which allows only a fixed pool price and thus will not work when the market price is dramatically different).

\subsection{Dynamic curve adjustment to generalize constant-product}

In this case, we can describe the market-price-tracking dynamic curve as follows:

\begin{equation} w(t) \cdot (x(t) - a(t)) \cdot y(t) = k \label{dynamic-cp-curve}
\end{equation}

Or alternatively, as:

\begin{equation}
    y(t) = \frac{\frac{k}{w(t)}}{x(t) - a(t)}
\end{equation}

Note that in the above expressions, $x(t)$ and $y(t)$ must always be strictly positive; $a(t)$ must be constrained to be always strictly less than $x(t)$; and $w(t)$ should always be strictly positive. The instantaneous price corresponding to the dynamic version of the constant product curve can be defined as follows: 

\begin{equation}
    p_X(t) = \frac{k}{w(t)} \cdot \frac{1}{(x-a(t))^2}  \label{price-dynamic-cp-curve}
\end{equation}

When the market price changes, then both $w(t)$ and $a(t)$ will have to be changed in order to (a) make sure that the new market price $p_{mkt}(t)$ matches $p_X(t)$ in equation~(\ref{price-dynamic-cp-curve}) and (b) x(t),y(t) match the curve described in equation~(\ref{dynamic-cp-curve}). Thus we have to solve two equations and two unknowns. The solution turns out to be the following : 
\begin{eqnarray}
a(t) & =& x(t) - \frac{y(t)}{p_{mkt}(t)} \nonumber \\
w(t) & =& \frac{k\cdot p_{mkt}(t)}{y(t)^2}
\end{eqnarray}

We remark: the first expression above ensures the requirement mentioned above that $a(t)$ will remain strictly less than $x(t)$ and the second expression ensures that $w(t)$ is strictly positive, so long as $k$, $p_{mkt}(t)$, $x(t)$ and $y(t)$ are all kept strictly positive at all.

\subsection{From divergence loss to slippage gain} 

As with the static setting, there is no slippage loss for traders or divergence loss in the case of dynamic constant sum AMM. This is because in the absence of any change in market price, the pool price does not change during a trade.

In the case of the dynamic constant product AMM, the traders do experience a slippage loss just like in the static constant product AMM case. However, corresponding gain in value is accrued entirely to the liquidity pool and could be referred to as a \emph{slippage gain} for the LP. Further, in the absence of change in the market price, because the pool price does not change in the dynamic constant product curve, there is no divergence loss. Rather, the LP benefits from each trade by the same slippage gain. Thus, the dynamic constant product AMM provides a strict improvement from the LP's perspective. This result, in fact, generalizes to the dynamic version of any strictly convex curve, as we show below.

\begin{theorem}
In a dynamic AMM based on a family of monotonically decreasing $y=f(x)$ curves that are strictly convex, when the market price remains fixed, the LP will gain value after each trade by an amount equivalent to the slippage loss of the trader.
\end{theorem}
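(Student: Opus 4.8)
The plan is to exploit the fact that the three relevant quantities---the pool price $p_X = -dy/dx$, the pool value $V_p = p_X \cdot x + y$, and the slippage---are all defined independently of the particular curve $f$. Consequently the explicit form of $f$ need never enter the argument; the only structural property I will use is the defining feature of the dynamic mechanism, namely that whenever the market price is unchanged the curve is re-adjusted so that the pool price is restored to $p_{mkt}$ at the post-trade liquidity pair. Strict convexity and monotonic decrease enter only to guarantee that the slippage is strictly positive, so that the LP's value change is a genuine gain.

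First I would fix the market price at $p_{mkt}$ and start the pool at a point $(x_o, y_o)$ on the current dynamic curve with pool price $p(x_o, y_o) = p_{mkt}$. A trade executes along this momentarily fixed curve, moving the pool to a new pair $(x_n, y_n)$; for a purchase of $\Delta x$ units of $X$ we have $x_n = x_o - \Delta x$ and the trader deposits $\Delta y = y_n - y_o$. Because the market price has not moved, the dynamic adjustment then rewrites the curve so that the pool price at $(x_n, y_n)$ is once again $p_{mkt}$, while leaving the token amounts $(x_n, y_n)$ untouched. This is the crucial point that permits evaluating the pool value at the single price $p_{mkt}$ both before and after the trade.

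Next I would write out the two quantities from their definitions and compare. From the slippage definition the trader's loss is $S = \Delta y - p_{mkt}\,\Delta x = (y_n - y_o) + p_{mkt}(x_n - x_o)$, and from the value definition evaluated at $p_{mkt}$ on both endpoints the change in pool value is $\Delta V = \left[p_{mkt}\,x_n + y_n\right] - \left[p_{mkt}\,x_o + y_o\right] = p_{mkt}(x_n - x_o) + (y_n - y_o)$. These two expressions are term-by-term identical, giving $\Delta V = S$. I would then note that the same computation handles a sale (with $x_n = x_o + \Delta x$ and the trader receiving $\Delta y = y_o - y_n$), again yielding $\Delta V = S$, and that strict convexity forces $S > 0$, so the gain is strictly positive.

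The main obstacle is conceptual rather than algebraic: one must justify that the post-trade pool value is legitimately measured at $p_{mkt}$ rather than at the transient pool price $p(x_n, y_n)$ that the un-adjusted curve would report. This is precisely where the dynamic mechanism does its work---by restoring the pool price to $p_{mkt}$ it removes the price mismatch that would otherwise drive a divergence loss, so that the value change measured at the common price $p_{mkt}$ is exactly the retained slippage gain. Stating this cleanly at the level of a general convex, decreasing $f$ (rather than only for the constant-product instance worked out above) is the one part that requires care.
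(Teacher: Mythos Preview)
Your proposal is correct and follows essentially the same logic as the paper's proof: both arguments hinge on the dynamic readjustment ensuring the pool value is evaluated at the common price $p_{mkt}$ before and after the trade, so that the value change reduces to $p_{mkt}(x_n-x_o)+(y_n-y_o)$, which coincides with the slippage expression, with strict convexity supplying positivity. Your version makes the algebra explicit where the paper argues geometrically (tangent below curve, excess $Y$ retained by the LP), but the content is the same.
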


\begin{proof} Assuming the market price does not change during a trade, for any strictly convex curve, the trader suffers a slippage loss at each trade. This is because the tangent to the curve (whose slope is equal to the pool price and therefore the market price) lies below the curve if it is strictly convex. If the trader buys $X$ tokens, it will therefore have to give the pool an amount of $Y$ tokens that exceeds what it should have given at the current market price. Likewise, if the trader sells $X$ tokens, it will receive an amount of $Y$ tokens less than what it should have received at the current market price. The gap, in either case, corresponds to the slippage loss. An equivalent amount is gained by the LP (when the trader buys $X$ tokens, the excess $Y$ tokens are sent to the LP; when the trader sells $X$ tokens, the gap corresponds to $Y$ tokens are withheld by the LP). There is no other source of divergence loss for the LP because the pool price is readjusted to the market price immediately after execution of the trade - its value increases precisely by the amount of slippage loss experienced by the trader.   
\end{proof}

\section{Simulation Results \label{sec:sims}}

We conduct a number of simulations to compare the performance of four different AMM's, two static AMM's (Constant-Sum AMM, Constant-Product AMM) and their two dynamic generalizations that we have introduced in section~\ref{sec:dynamic}.  Although it is not practically implemented due to its shortcomings, we nevertheless use the constant-sum AMM as a baseline to highlight how the dynamic version of this scheme has  advantages and may be more interesting from a practical perspective.

\begin{figure*}
    \centering
    \includegraphics[scale=0.12]{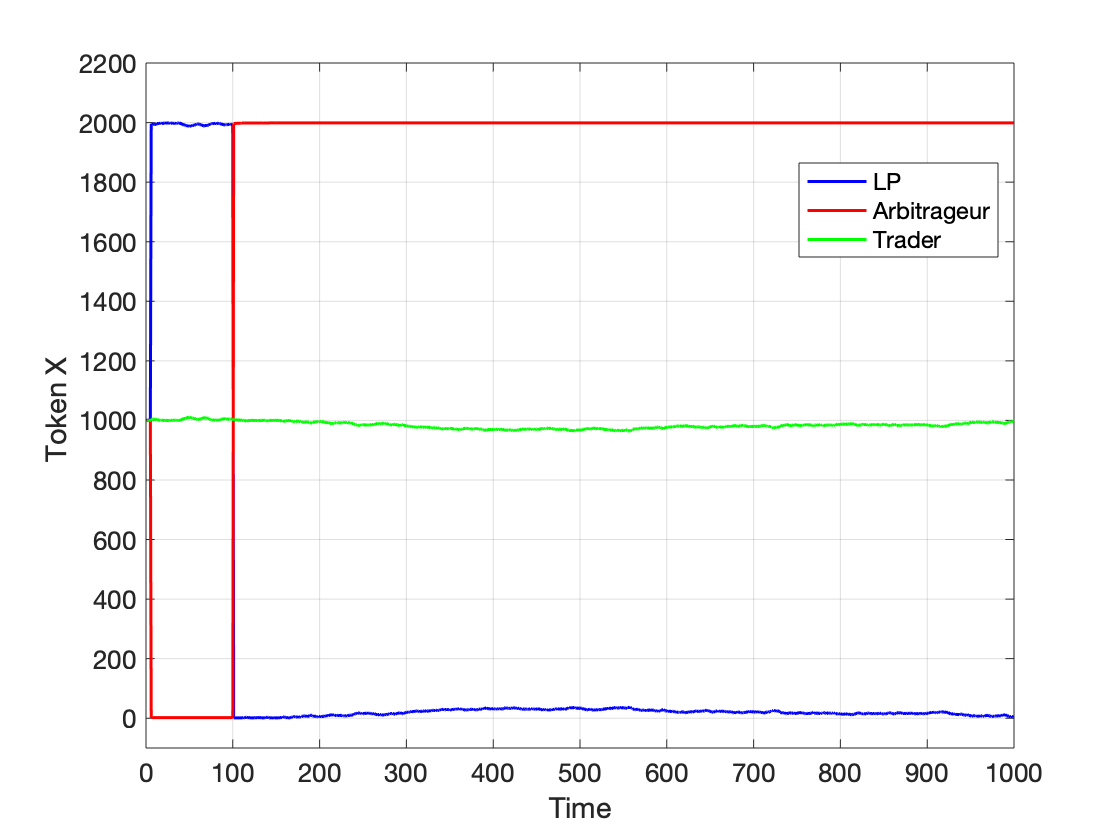} 
    \includegraphics[scale=0.12]{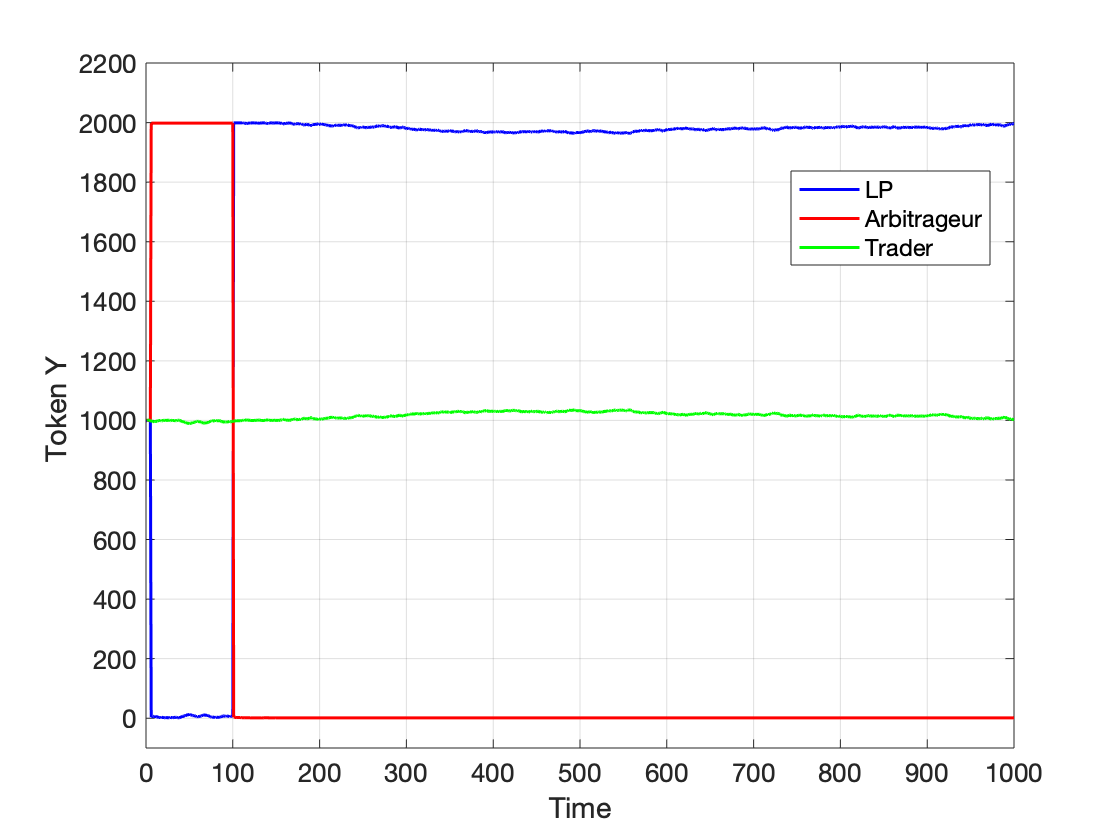}
    \includegraphics[scale=0.12]{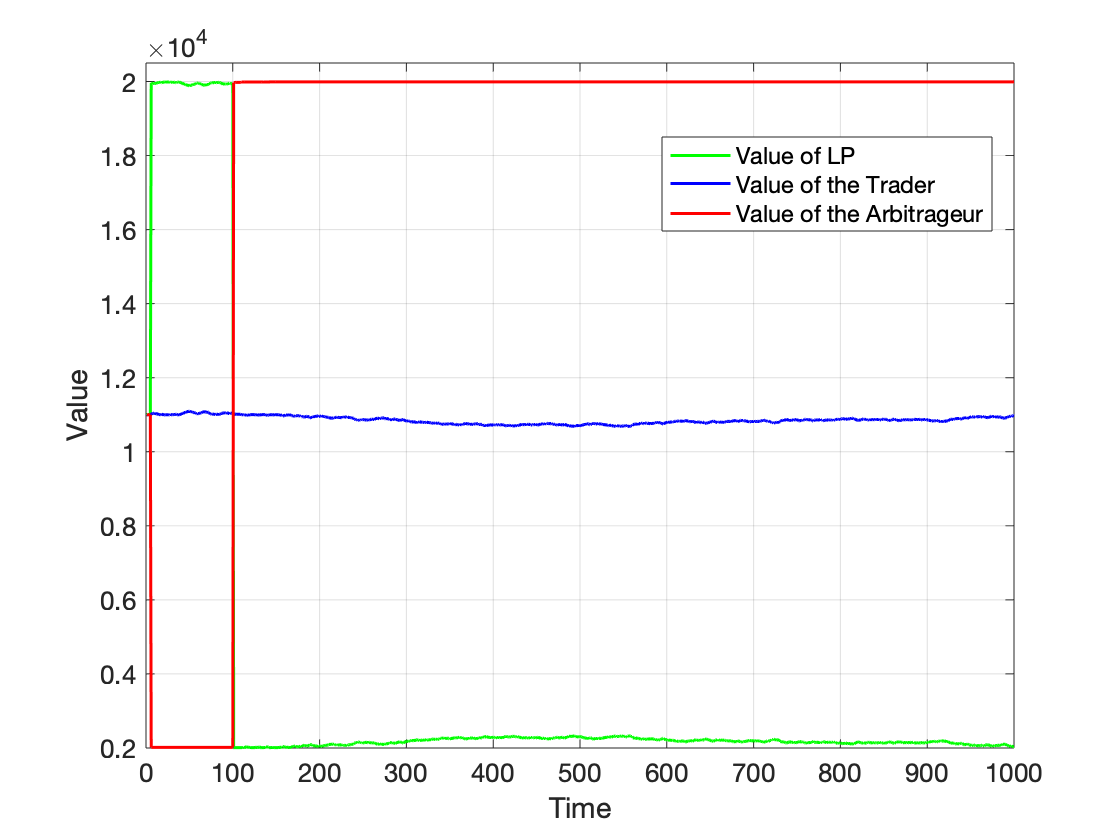}\\
    (a) X token holdings \hspace{0.8in}(b) Y token holdings \hspace{0.8in}(c) Total Value \\
    \includegraphics[scale=0.12]{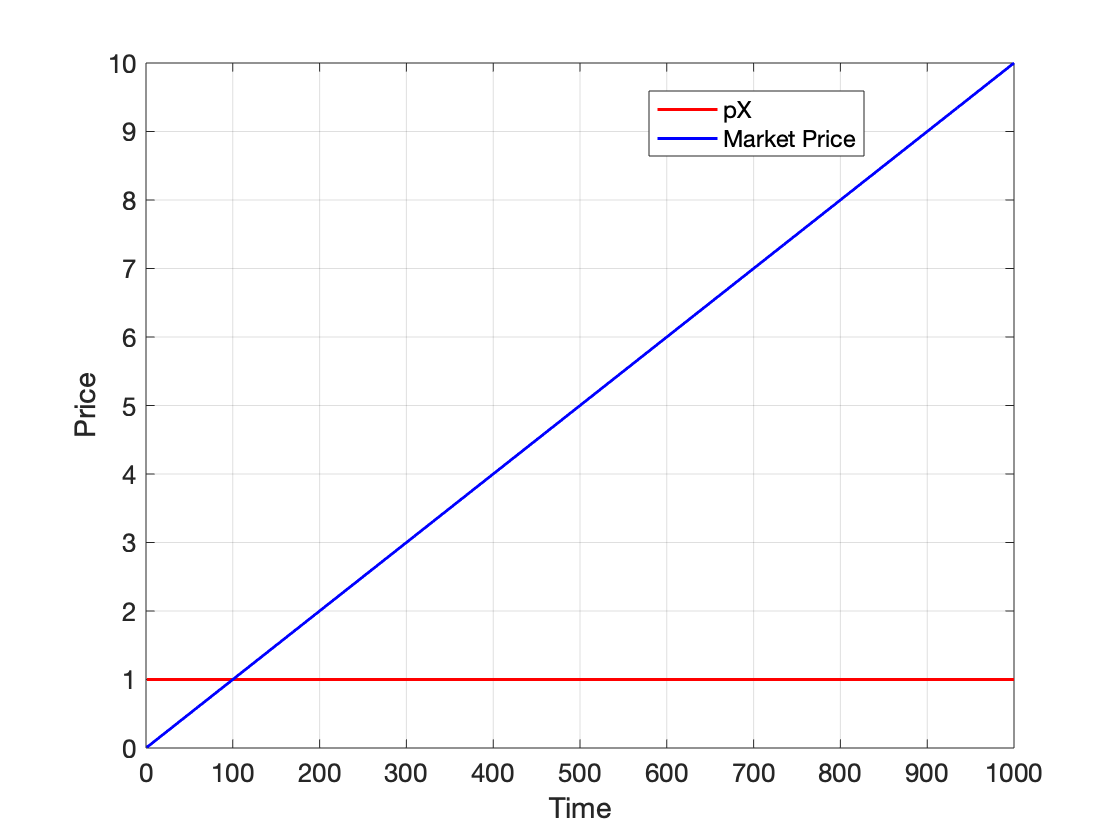}
    \includegraphics[scale=0.12]{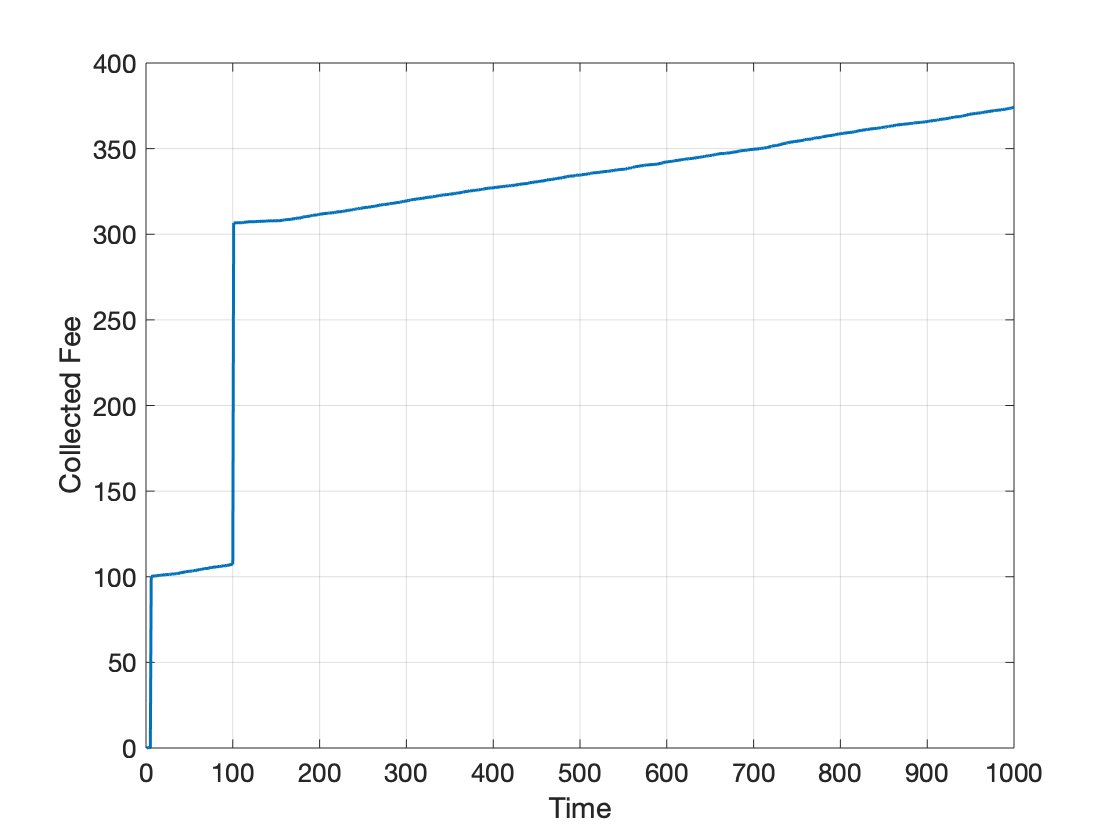}
    \includegraphics[scale=0.12]{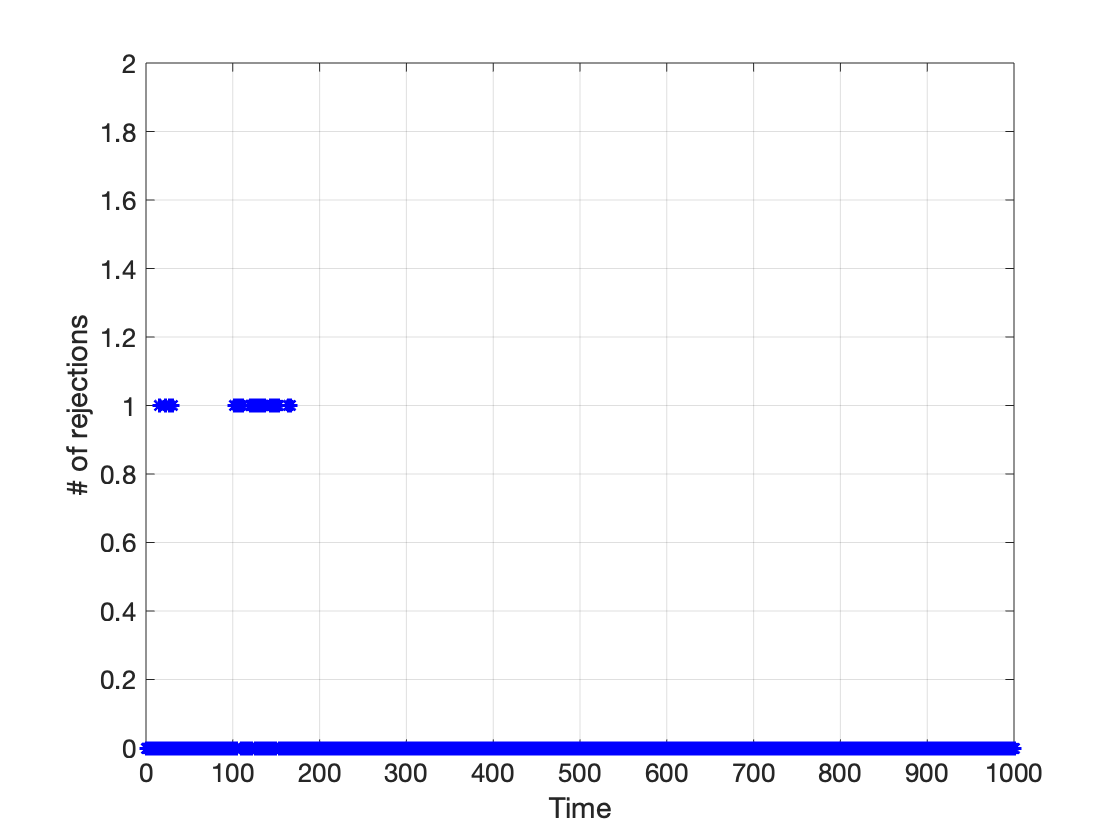}\\
        (d) Market vs Pool Price \hspace{0.75in}(e) Collected Fees \hspace{0.8in}(f) Declined Trades \\
    \caption{Simulation of a Static Constant Sum AMM}
       \label{fig:staticconstantsum}
 
\end{figure*}

\subsection{Simulation Setup}
In all our simulations, we consider the three parties involved: 
\begin{itemize}
    \item The liquidity pool (LP).
    \item A trader (which could be viewed as a collection or series of independent traders who are merely interested in exchanging a relatively small amount of one of the tokens in the LP for another).
    \item An arbitrageur (a profit-maximizing agent responding specifically to any gaps in price between the pool and the broader market; the arbitrageur in our simulation could also be viewed as a collection or series of independent arbitragers).
\end{itemize}
Initially, all three parties have 1000 $X$ tokens and 1000 $Y$ tokens. At each time, a trader makes a random trade (buy or sell $X$ token) drawn from a standard normal distribution (i.i.d. Gaussian process with zero mean and unit variance). The relatively small standard deviation would result in relatively light trading or profit from transaction fees collected from trades (that are not for the purpose of arbitrage).  We use the exact same sequence of random trades in evaluating all four AMM's, to enable a fair comparison. We model the market price as undergoing a linear increase from 0 to 10 over the course of 1000 time steps. 

\subsection{Static Constant Sum AMM}
Figure~\ref{fig:staticconstantsum} shows the results for the baseline static constant AMM. As shown by figure~\ref{fig:staticconstantsum}(d), the market price is initially below the pool price of 1, and after 100 time steps it switches
to being above the pool price. As can be seen from figure~\ref{fig:staticconstantsum} (a), (b), the arbitrageur initially sells all its $X$ tokens to buy up all of the $Y$ tokens from the LP when the market price is low, and then it buys up all the $X$ tokens from the LP when the market price becomes high. By draining the pool of its liquidity in one of its assets, to the pool's disadvantage, the arbitrageur effectively extracts most of the total value from the LP. Figure~\ref{fig:staticconstantsum}(c) shows the total value of the LP, the arbitrageur and the trader measured in terms of the market price at the end of the simulation (when $X$ tokens are worth 10 $Y$ tokens); it shows that the LP ends up with only about a tenth of the value it had at the beginning of the simulation. The LP does collect some fees from the arbitrageur (per figure\ref{fig:staticconstantsum}(e)), but they are relatively modest compared to the loss in value. Although the trader doesn't suffer slippage loss in this case, it can be disappointed and face a rejection of its requested trade whenever the pool lacks sufficient tokens of the type sought by the trader (as shown in figure~\ref{fig:staticconstantsum}(f); this happens particularly early on when the arbitrageur is able to cause the LP to be nearly empty of one of its assets, then another, disrupting the LP's ability to serve the regular trader).

\subsection{Static Constant Product AMM}

\begin{figure*}
    \centering
    \includegraphics[scale=0.12]{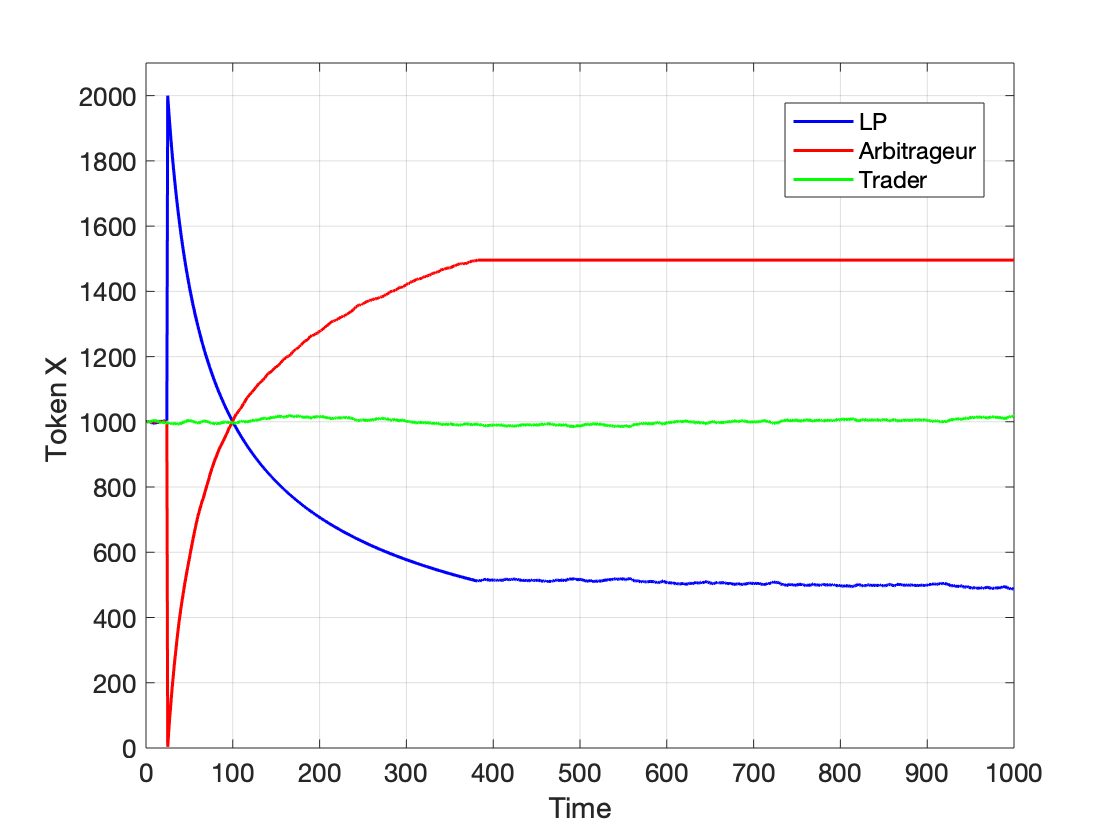}
    \includegraphics[scale=0.12]{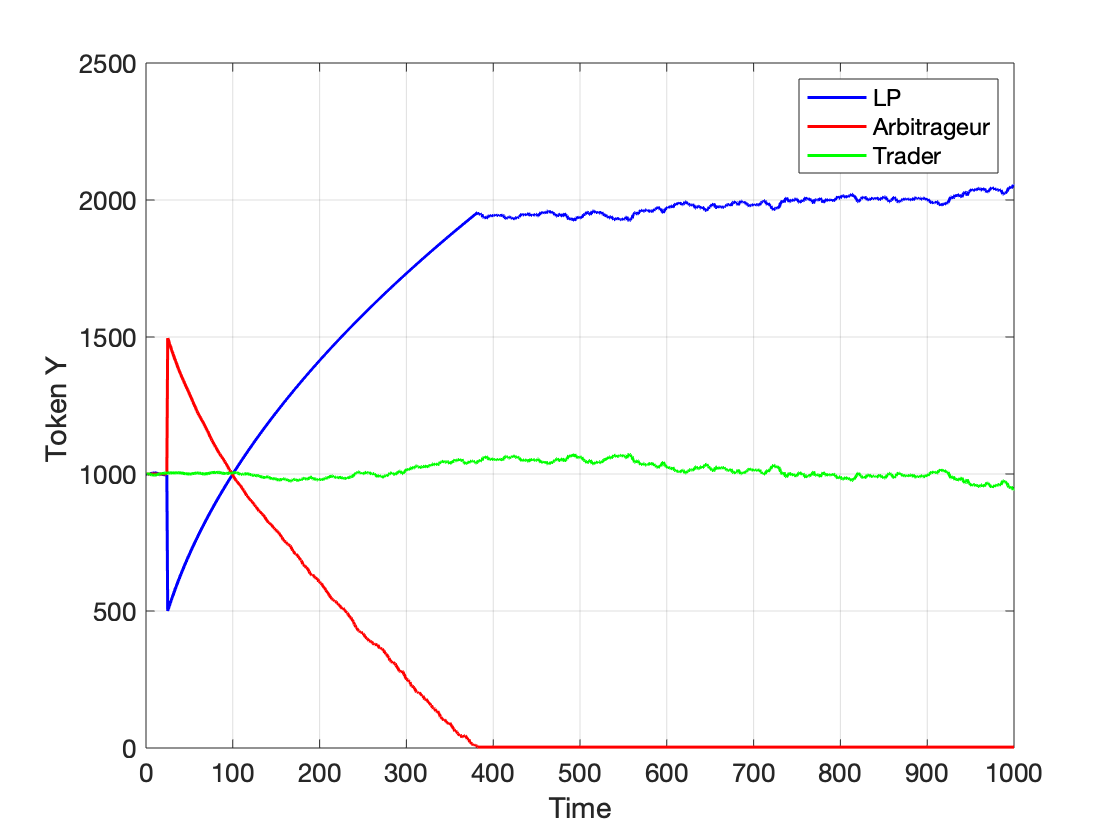}    
    \includegraphics[scale=0.12]{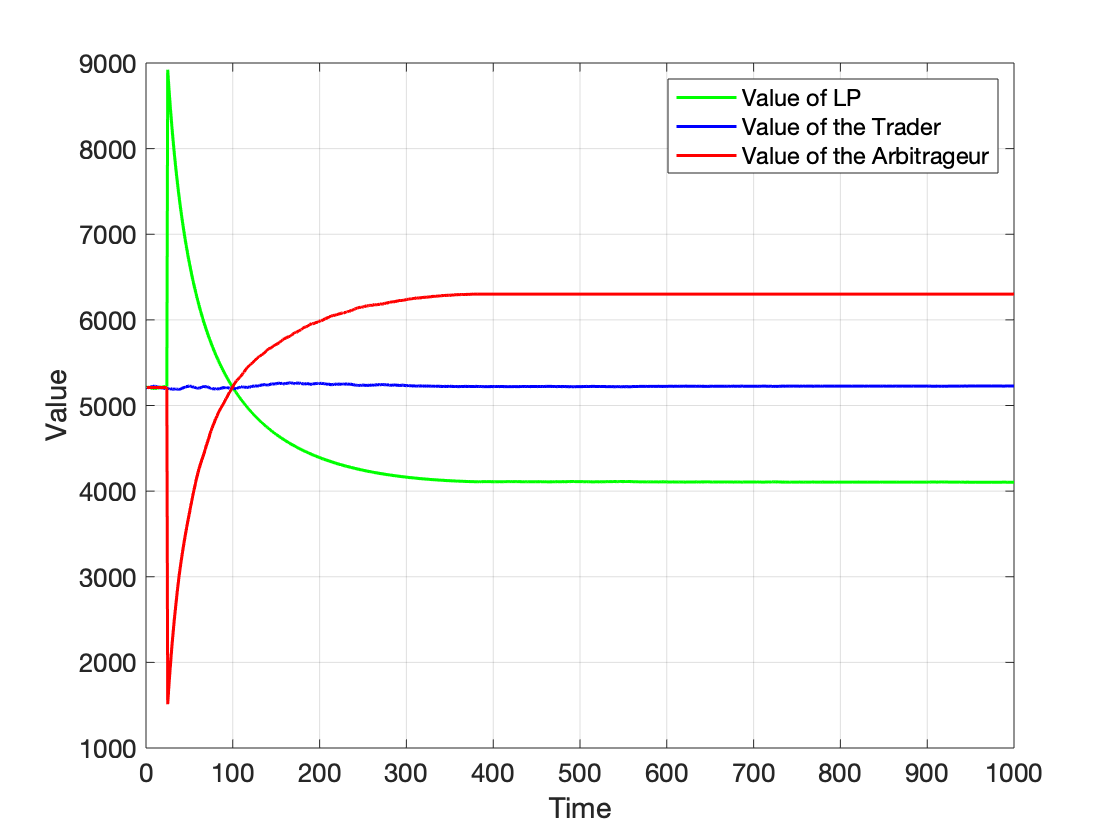}\\
    (a) X token holdings \hspace{0.8in}(b) Y token holdings \hspace{0.8in}(c) Total Value \\
   \includegraphics[scale=0.12]{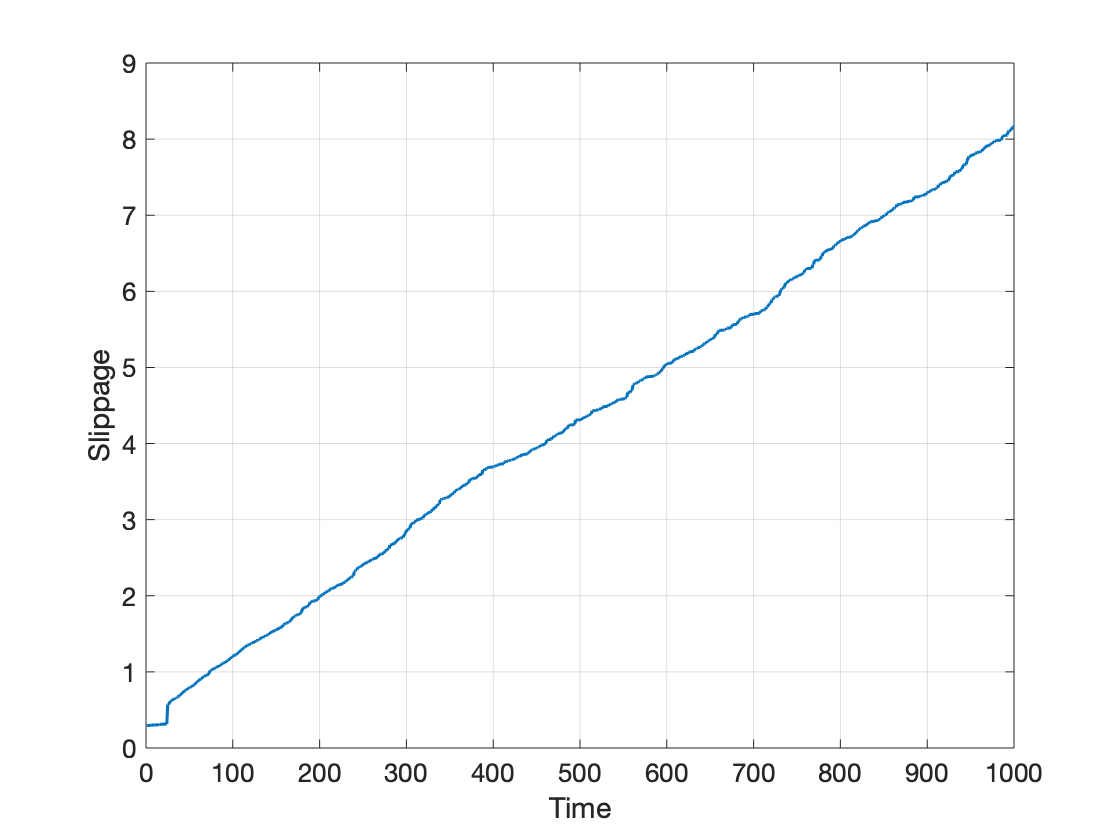}
    \includegraphics[scale=0.12]{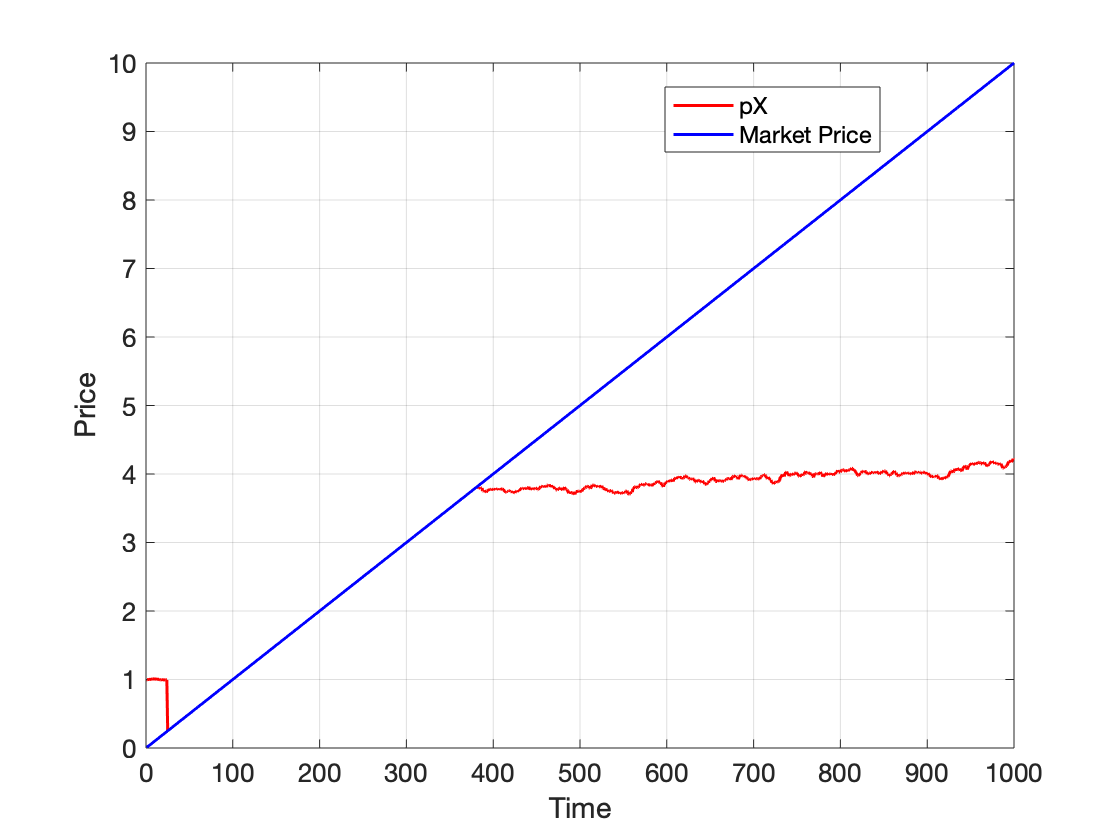}
    \includegraphics[scale=0.12]{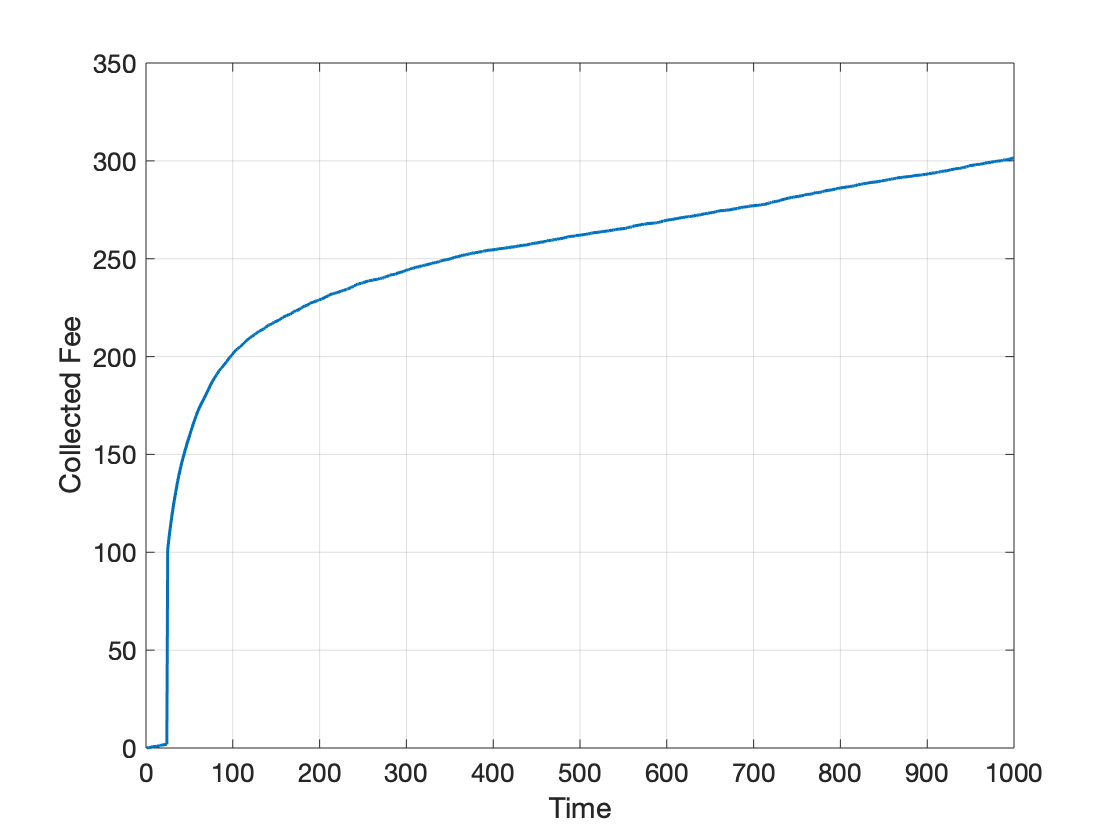}\\
        (d) Cumulative Slippage \hspace{0.65in}(e) Market vs Pool Price  \hspace{0.65in}(f) Collected Fees \\
    \caption{Simulation of a Static Constant Product AMM}
    \label{fig:staticconstantproduct}
\end{figure*}

\begin{figure*}
    \centering
    \includegraphics[scale=0.12]{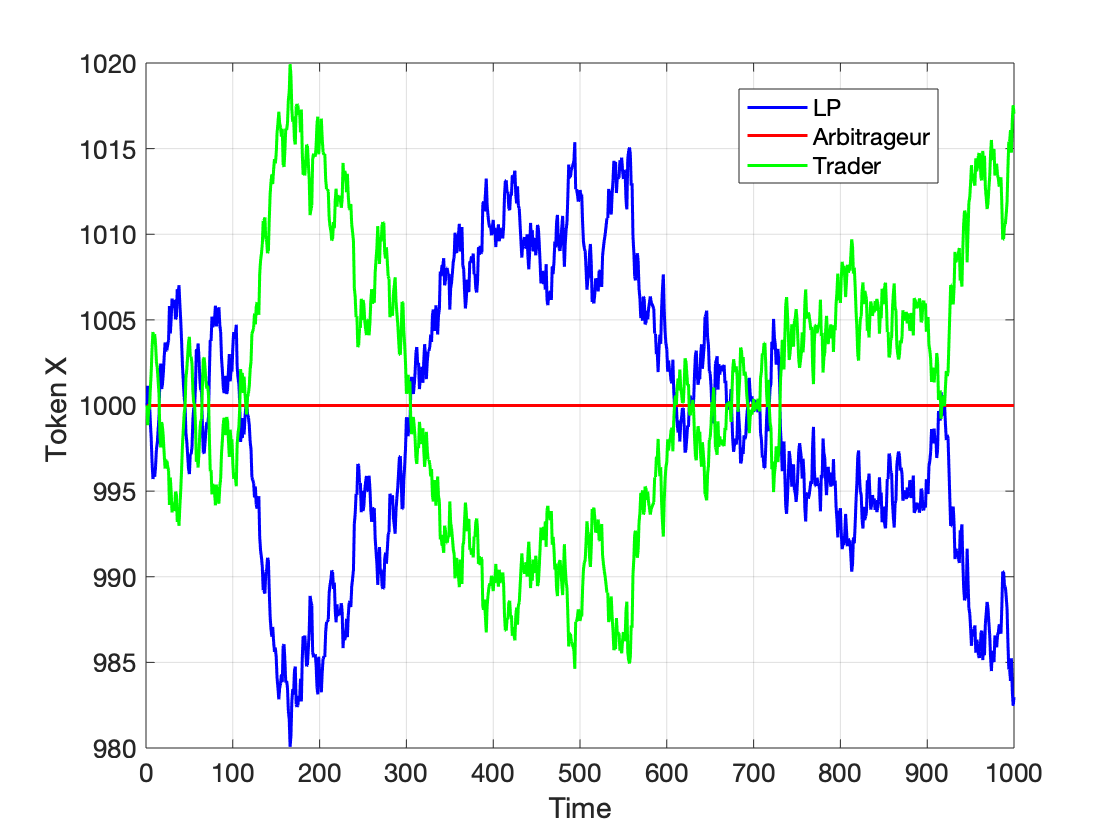}
    \includegraphics[scale=0.12]{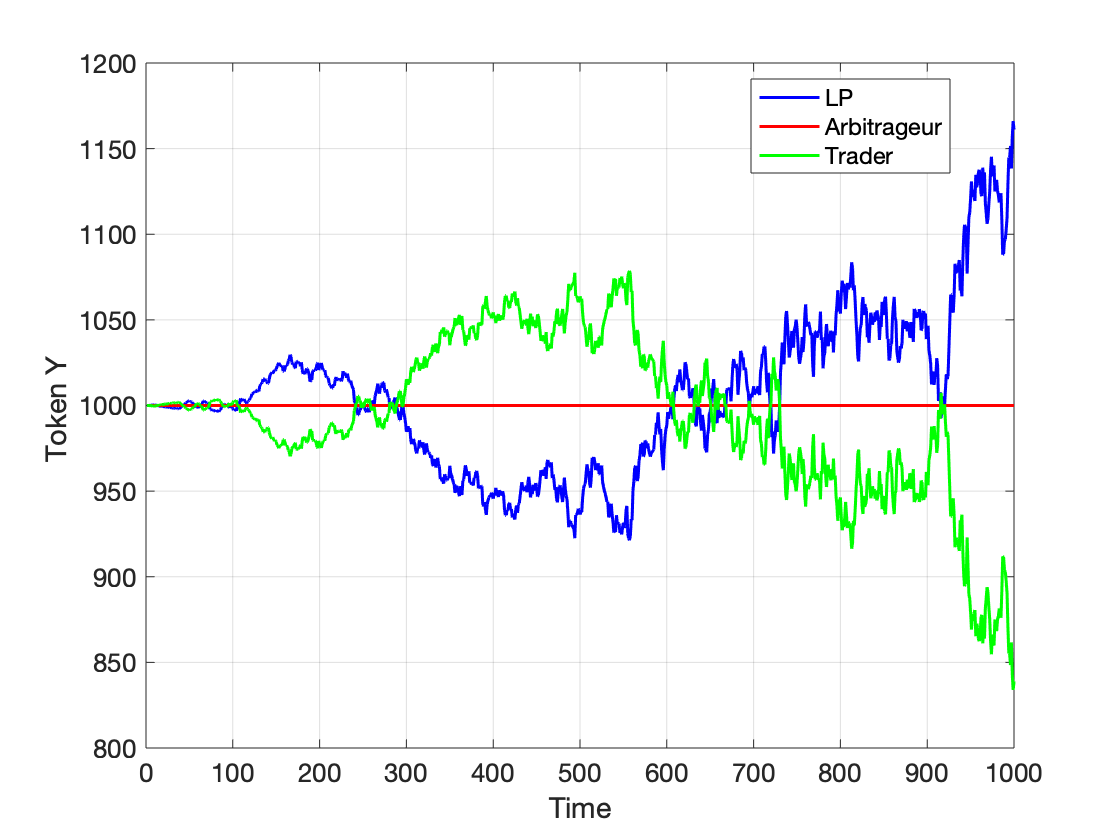}        
    \includegraphics[scale=0.12]{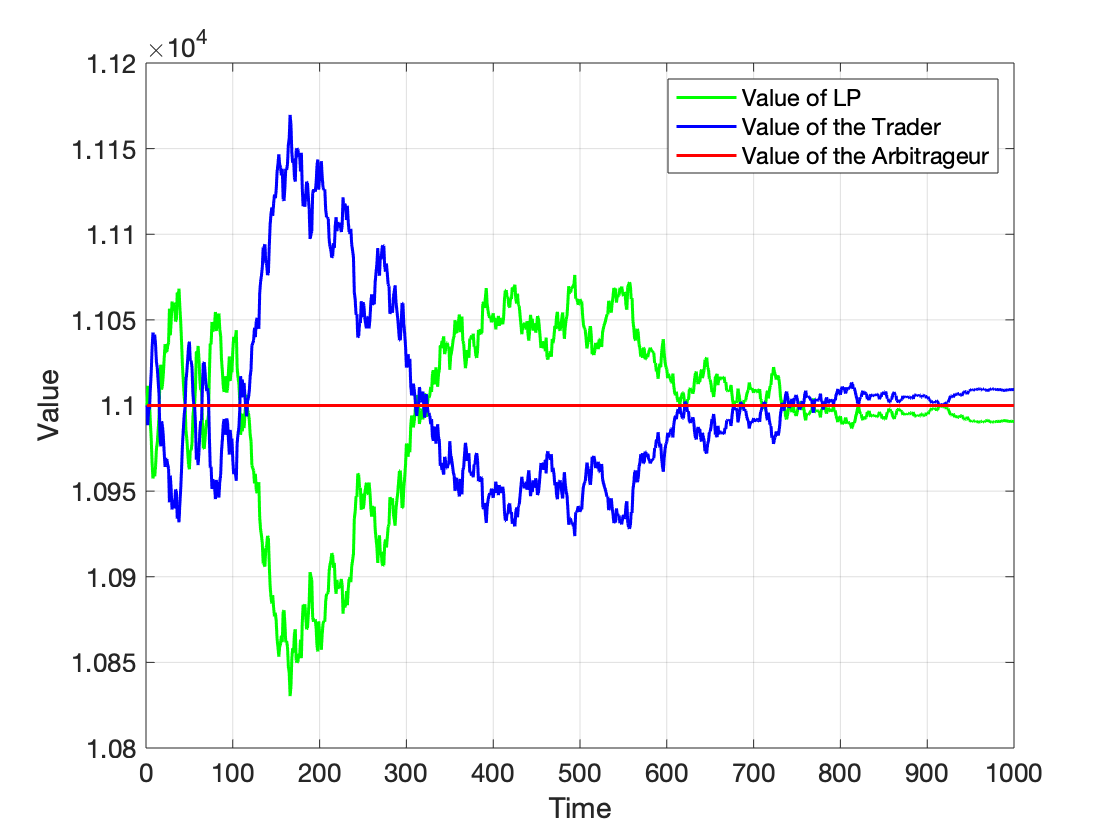}\\
    (a) X token holdings \hspace{0.8in}(b) Y token holdings \hspace{0.8in}(c) Total Value \\
    \includegraphics[scale=0.12]{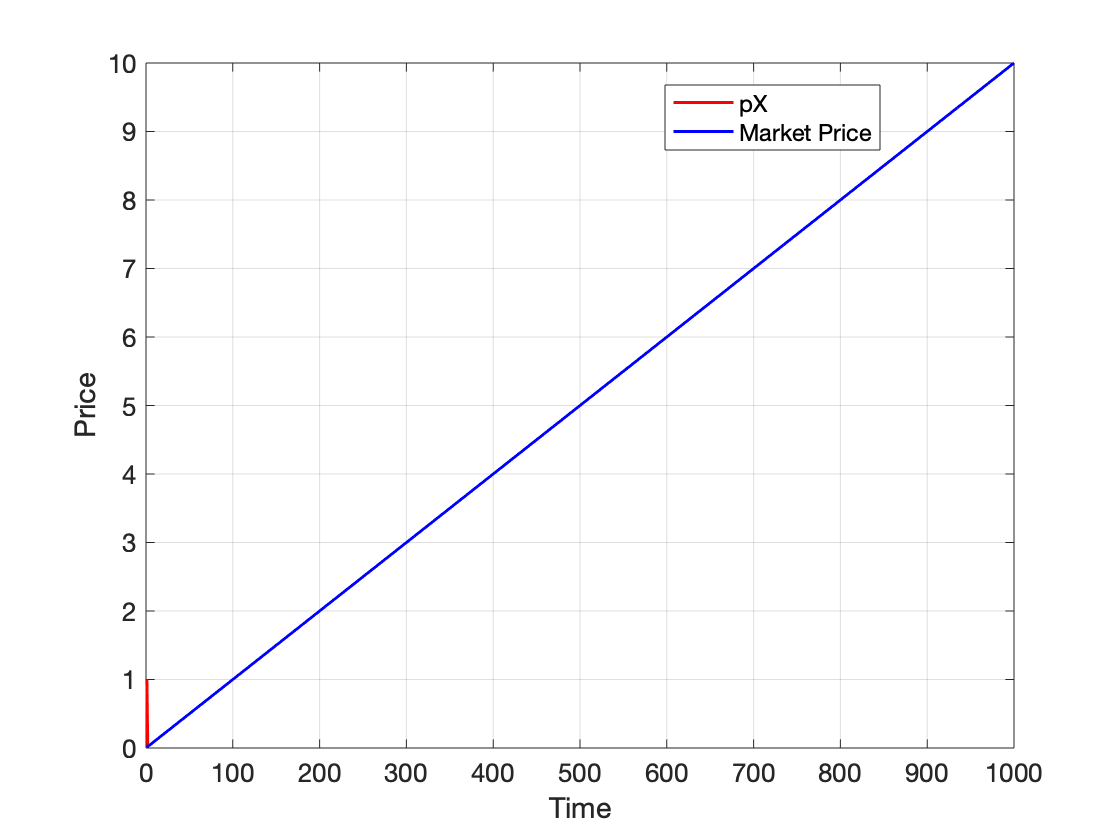}
    \includegraphics[scale=0.12]{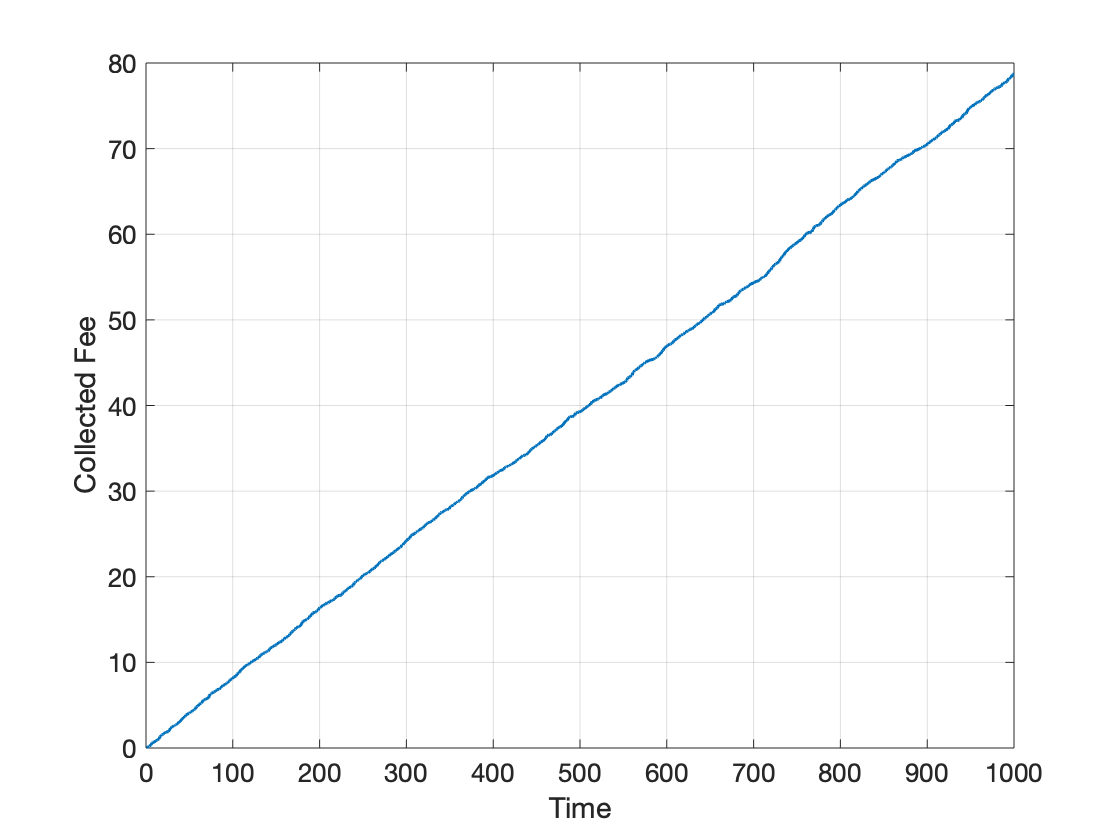}
    \includegraphics[scale=0.12]{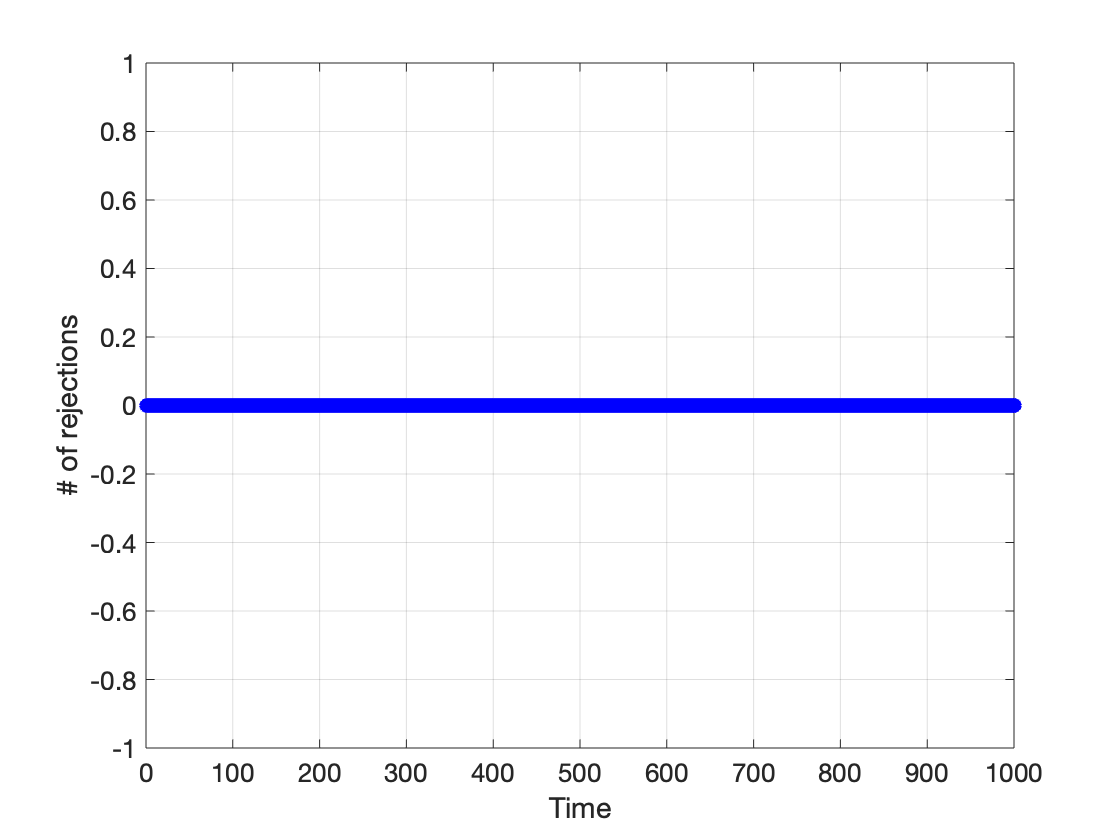}\\
        (d) Market vs Pool Price \hspace{0.8in}(e) Collected Fees \hspace{0.8in}(f) Declined Trades \\
    \caption{Simulation of a Dynamic Constant Sum AMM }
    \label{fig:dynamicconstantsum}
\end{figure*}

\begin{figure*}
    \centering
    \includegraphics[scale=0.12]{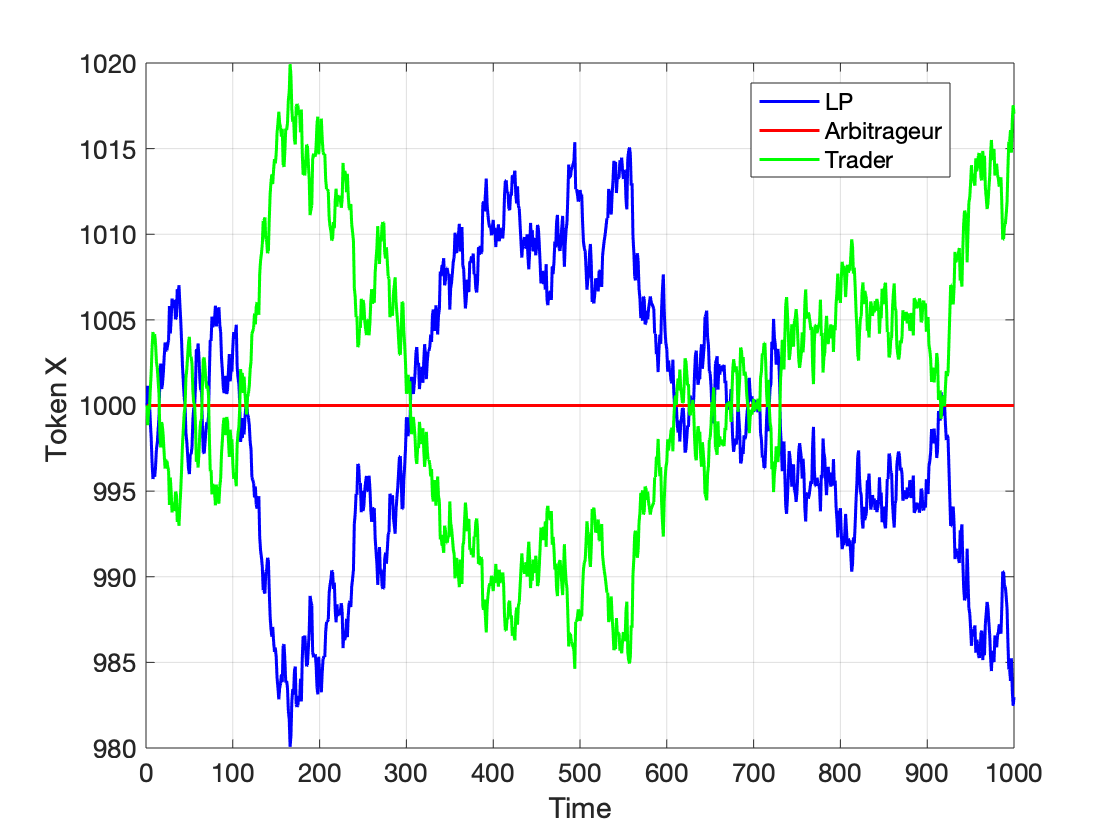}
    \includegraphics[scale=0.12]{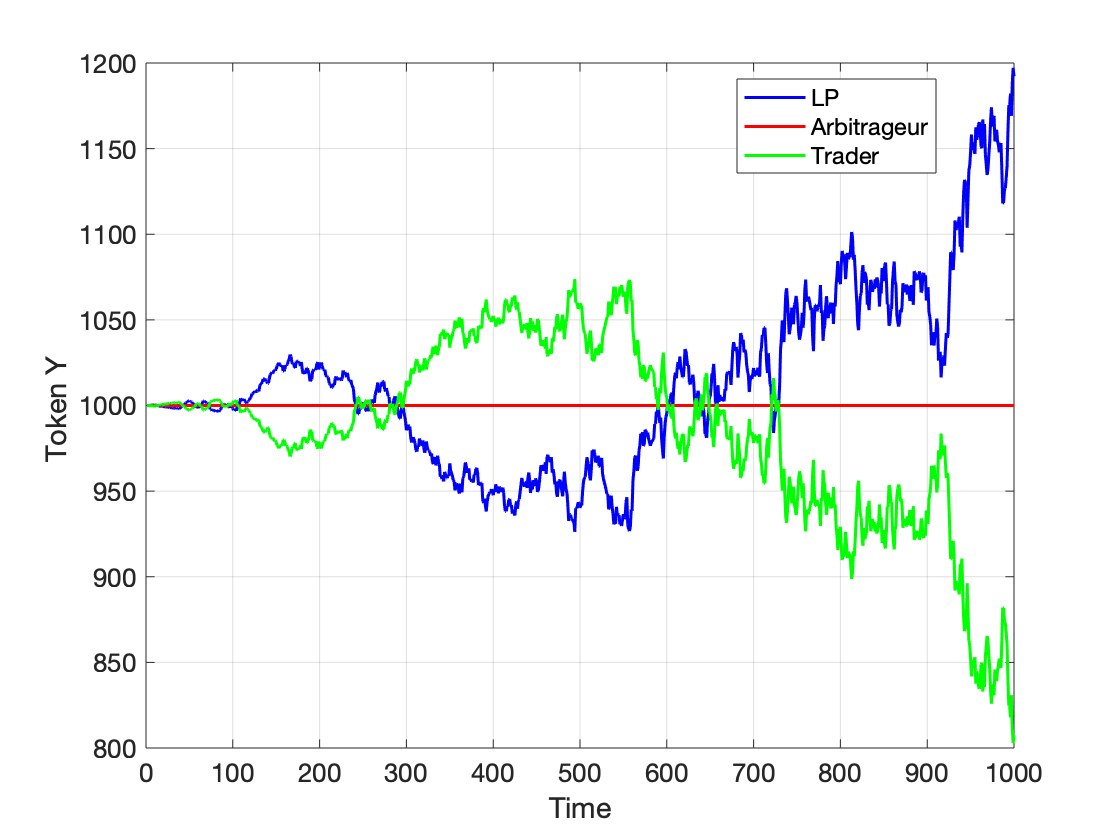}
    \includegraphics[scale=0.12]{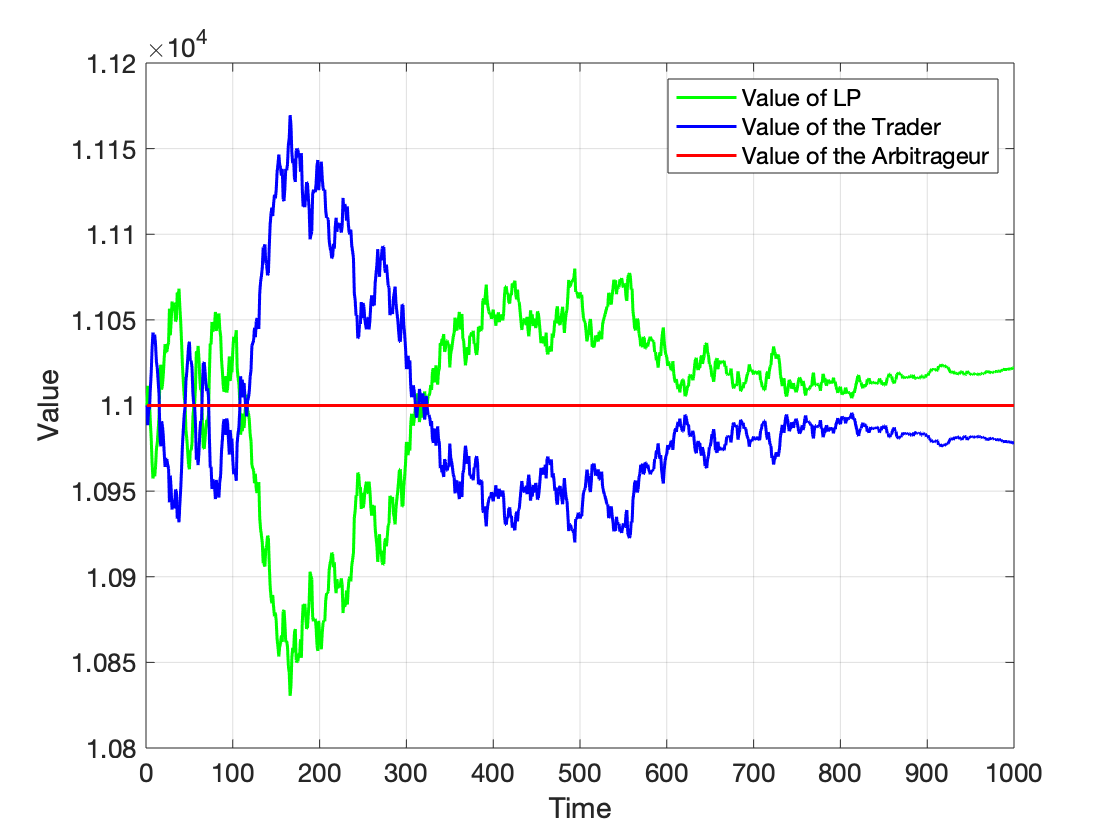}\\
    (a) X token holdings \hspace{0.8in}(b) Y token holdings \hspace{0.8in}(c) Total Value \\
    \includegraphics[scale=0.12]{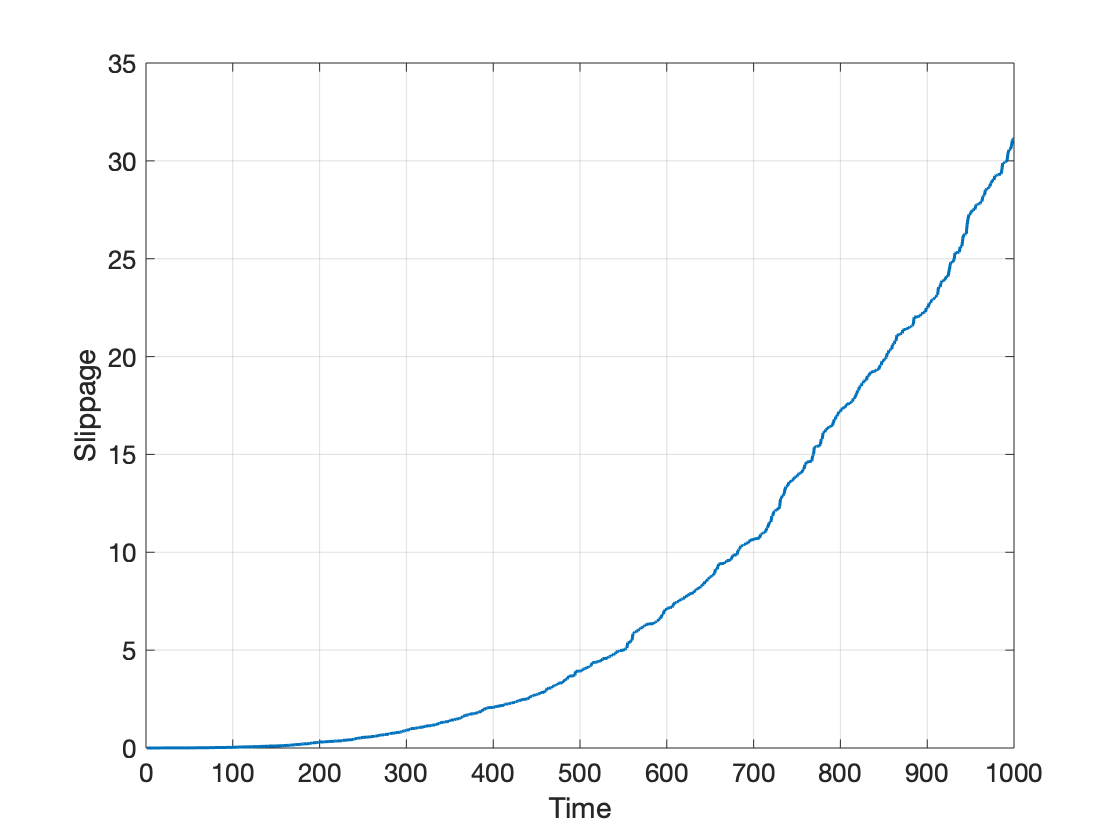}
    \includegraphics[scale=0.12]{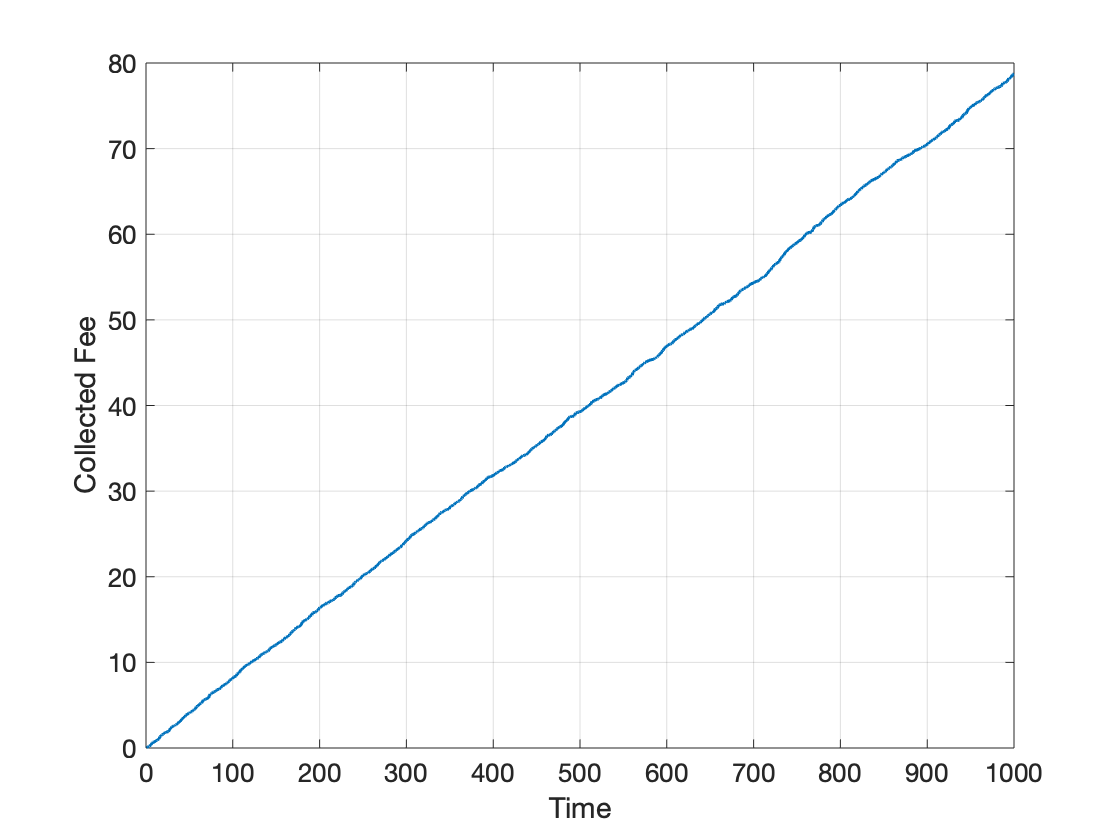}
    \includegraphics[scale=0.12]{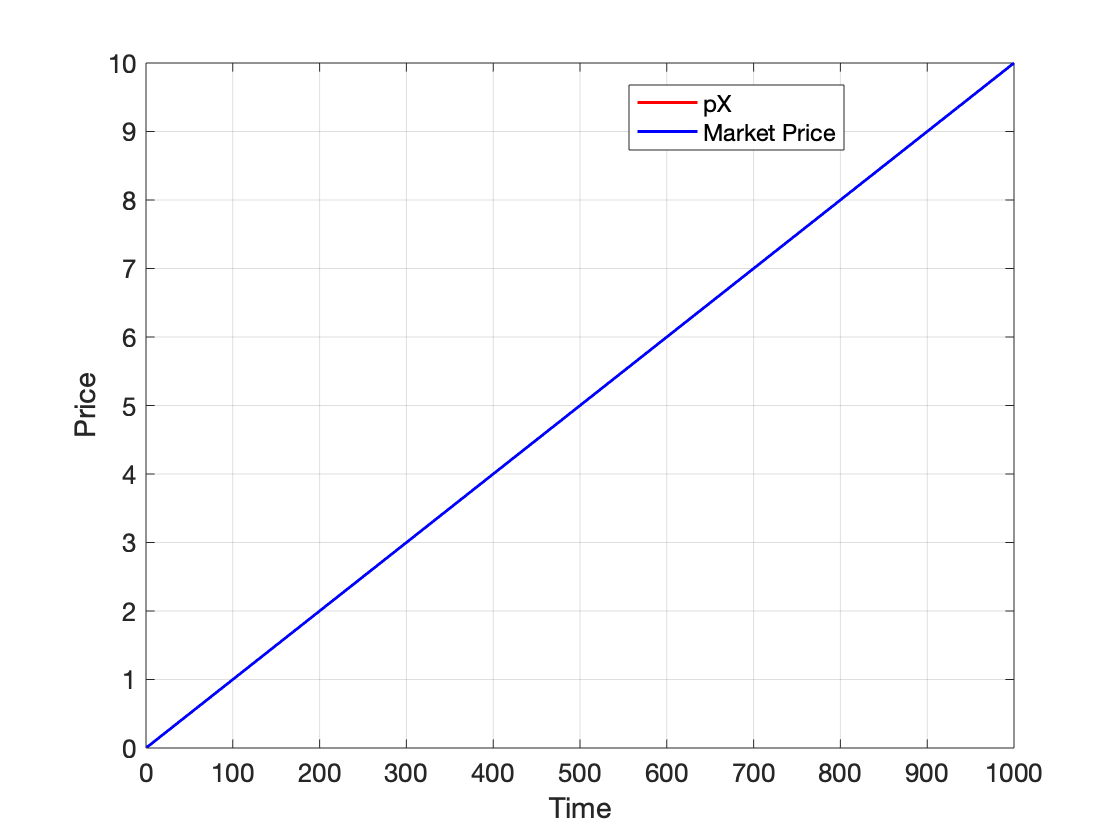}\\
        (d) Cumulative Slippage \hspace{0.65in}(e) Collected Fees \hspace{0.7in}(f) Market vs Pool Price \\

    \caption{Simulation of a Dynamic Constant Product AMM}
    \label{fig:dynamicconstantproduct}
\end{figure*}

Figure~\ref{fig:staticconstantproduct} shows the results for the static constant product AMM. Here too, we can see from Figures~\ref{fig:staticconstantproduct}(a) and (b), the arbitrageur has a big impact on the token holdings of the LP. Over time, as the market price goes up, the arbitrageur keeps buying $X$ tokens from the pool, equalizing the pool price and the market price. Eventually the arbitrageur has no more $Y$ tokens available (after which point the pool prize stays below the market price, see Figure~\ref{fig:staticconstantproduct}(e)). As shown in figure~\ref{fig:staticconstantproduct}(c), the total value of the LP reduces while that of the arbitrageur grows until the latter runs out of $Y$ tokens. In reality, however, a well-funded arbitrageur would keep going until all $X$ tokens are depleted from the LP, causing it to lose even more value. The trader suffers some constant slippage on average in every trade, as illustrated by the cumulative slippage showing a linear increasing trend  in Figure~\ref{fig:staticconstantproduct}(d).  There are again some collected fees from the arbitrageur and the regular trader, but here too they pale in comparison to the loss in value caused by the arbitrageur's actions. It is clear that just like with the constant sum setting, in the long term, the LP is at the mercy of the arbitrageur in the presence of significant market price volatility. 


\subsection{Dynamic Constant Sum AMM}

Figure~\ref{fig:dynamicconstantsum} shows the results for the dynamic constant sum AMM. Here the arbitrageur is eliminated as the pool price is automatically adjusted to the market price after each trade. The total tokens of either types $X$ or $Y$ remain the same and the holdings of the LP and the trader are mirror images of each other, as shown in figures~\ref{fig:dynamicconstantsum}(a) and \ref{fig:dynamicconstantsum}(b). The total value of the LP doesn't show a dramatic change other than random fluctuation caused by stochastic trading on the pool. In the absence of aggressive arbitrage, the collected fees are somewhat modest. As shown by figure~\ref{fig:dynamicconstantsum}(e), the LP collects the expected amount of fees (about 80 over 1000 trades, corresponding to the expectation of the absolute value of a normal random variable). A noticeable fact about the dynamic constant sum AMM is that \emph{the LP maintains more than 85\% of the original amount of tokens of both types at all times} during the simulation.  This is in sharp contrast to the liquidity problems observed in the static settings (figures~\ref{fig:staticconstantsum} and \ref{fig:staticconstantproduct}). As a result, there are no declined trades, as evidenced by figure~\ref{fig:dynamicconstantsum}(f). Another related comment to be made here is that there is little incentive for large trades in this dynamic AMM because the price is always pegged to the market price and the transaction fees (being a constant percentage) impose a higher cost in absolute terms  on larger trades.



\subsection{Dynamic Constant Product AMM}


Finally, figure~\ref{fig:dynamicconstantproduct} shows the results for the dynamic constant product AMM. Here too, the arbitrageur is eliminated as the pool price is automatically adjusted to the market price after each trade (see figure~\ref{fig:dynamicconstantproduct}(f)), and the $X$ and $Y$ tokens held by the trader and LP mirror each other as shown in figure~\ref{fig:dynamicconstantproduct}(a) and figure~\ref{fig:dynamicconstantproduct}(b). The total value of the LP in figure~\ref{fig:dynamicconstantproduct}(c) doesn't show a dramatic change other than the fluctuation caused by stochastic trading on the pool and a  slight increase in value for the LP due to the slippage gain discussed earlier in this paper. The corresponding cumulative slippage loss for the trader, equivalent to the cumulative slippage gain for the LP, is shown in figure~\ref{fig:dynamicconstantproduct}(d), and it is this gain that results in a slight positive drift in the LP value over time. As with the dynamic constant sum AMM, in the absence of aggressive arbitrage, the LP collects the expected amount of fees from the regular trader ($\approx 80$ over 1000 trades, corresponding to the expectation of the absolute value of a normal random variable). Compared to the dynamic constant sum setting, here, there is even less incentive for any trader to execute a large trade as they would suffer a large slippage loss in addition to paying transaction fees.  Figure~\ref{fig:dynamicconstantproduct}(a) and figure~\ref{fig:dynamicconstantproduct}(b) show that the LP maintains high liquidity in both assets again -- more than 90\% of the original amount of tokens of both types at all times during the simulation.

\section{Conclusions \label{sec:conclusions}}

We have given a detailed introduction to curve-based AMMs for decentralized cryptocurrency exchanges in this work. We introduced a new approach to operating such curve-based AMM decentralized exchanges that utilizes an oracle with a real-time market price feed to continuously and automatically adjust the pool price to the market price by dynamically adjusting the curves over time. We showed that in such a dynamic AMM, there is no room for arbitrage. The slippage loss for traders is converted to an equivalent gain for the liquidity pool. The LP maintains a high level of liquidity in all assets and its total value stays fairly stable over time. Such a dynamic AMM results in a slightly lower collection of transaction fees due to the elimination of arbitrageurs, but this loss is more than offset by the ability to maintain the total value of the pool. 

From a practical perspective, implementing such a dynamic AMM requires the use of a low-latency and accurate market price oracle. It would be of great interest to develop and evaluate a real-world decentralized exchange based on this approach.

\bibliographystyle{ieeetr}
\bibliography{references}

\end{document}